\title{Streaming Algorithms for Partitioning Integer Sequences}
\author{Christian Konrad \inst{1} \and L\'aszl\'o Kozma \inst{2}}
\institute{Reykjavik University, Reykjavik, Iceland \email{christiank@ru.is} \and Universit\"{a}t des Saarlandes, Saarbr\"{u}cken, Germany \email{kozma@cs.uni-saarland.de}}
\newcommand{\polylog}{\mathop{\mathrm{polylog}}}
\def\lc{\left\lceil}   
\def\rc{\right\rceil}
\def\lf{\left\lfloor}   
\def\rf{\right\rfloor}
\newcommand{\Order}{\mathrm{O}}
\newcommand{\Exp}{\mathrm{Exp}}
\newcommand{\comment}[1]{\textcolor{blue}{\textbf{#1}}}
\newcommand{\lk}[1]{\comment{\textcolor{blue}{#1}}}
\newcommand{\ignore}[1]{}
\begin{document}

\maketitle

\begin{abstract}
 We study the problem of partitioning integer sequences in the one-pass data streaming model. Given is an input stream
of integers $X \in \{0, 1, \dots, m \}^n$ of length $n$ with maximum element $m$, and a parameter $p$. The goal is to 
output the positions of separators splitting the input stream into $p$ contiguous blocks such that the maximal weight
of a block is minimized. We show that computing an optimal solution requires linear space, 
and we design space efficient $(1+\epsilon)$-approximation algorithms for this problem following the parametric search framework. 
We demonstrate that parametric search can be successfully applied in the streaming model, and we present
more space efficient refinements of the basic method. All discussed algorithms require space
$\Order( \frac{1}{\epsilon} \polylog (m,n,\frac{1}{\epsilon}))$, and we prove that the linear dependency on $\frac{1}{\epsilon}$
is necessary for any possibly randomized one-pass streaming algorithm that computes a $(1+\epsilon)$-approximation.
\end{abstract}


\section{Introduction}
In this paper, we study the problem of \textit{partitioning integer sequences}. Given a sequence of integers $X \in \{0, 1, \dots, m\}^n$ of length $n$, with maximum element $m$, and an integer $p \ge 2$, the goal is to partition $X$ into $p$ contiguous blocks
such that the maximum weight (sum of the elements) of a block is minimized. In other words, we have to find $p-1$ 
separators $s_1, \dots, s_{p-1}$ with $1 = s_0 \le s_1 \le \dots \le s_{p-1} \le s_p = n+1$ such that 
$$\max \left\{ \sum_{i=s_j}^{s_{j+1}-1}X_i \, \ \Bigl\vert\Bigr. \ \, j \in \{0, \dots, p-1\} \right\}$$ is minimized. The
value of the previous expression is called the \textit{bottleneck value} of the partitioning. In the following, for any integer $j \in \{0, \dots, p-1\}$ we refer to the elements $\{X_{s_j}, \dots, X_{s_{j+1}-1} \}$ as a \emph{partition}, and we refer to the sum of these elements as the \emph{weight} of the partition.

This problem appears in many applications, especially in the context of load balancing, and has been extensively 
studied both from a theoretical \cite{b88,hl92,mo95,ms95,kms97,hcn92} and a practical perspective \cite{mp97,pa04}.
In the literature, it appears under various names such as \textit{chains-on-chains partitioning}
\cite{pa04,hcn92} or \textit{1D rectilinear partitioning} \cite{mp97}.

Very efficient exact algorithms for this problem exist, for example the $\Order(n \log n)$ time algorithm of Khanna et al.\ \cite{kms97}, 
the $\Order(n + p^{1+\epsilon})$ time algorithm of Han et al.\ \cite{hcn92}, and the optimal $\Order(n)$ time algorithm
of Frederickson \cite{f91}. However, all existing approaches require either 
random access to the input or at least multiple access to the same input element. Since in many applications
the input integer sequences are huge and cannot be entirely stored in a 
computer's random access memory, data access is a bottleneck for the previously mentioned algorithms.  
One example application is the decomposition of computational meshes along space filling curves \cite{sw05,k11,b13}. In
parallel scientific computing, for instance in the area of parallel particle simulations or parallel solutions of partial differential 
equations, huge meshes have to be decomposed and distributed to different computational units.
In the space filling curves approach, mesh elements are linearly ordered along a space filling curve
which allows the reduction of the multi-dimensional decomposition problem to the one-dimensional problem of partitioning 
integer sequences, the problem studied in this paper. Today, meshes of Gigabyte or 
even Terabyte size are common and exceed by far a computer's random access memory. Algorithms for this problem 
should therefore have an IO-efficient memory access pattern. In this paper, we are therefore interested in 
\textit{streaming algorithms} for the problem of partitioning integer sequences.

\textit{Streaming Model.} In the data streaming model, an algorithm receives its input as a data stream piece by piece. 
The algorithm is granted
a small random access memory which is often only polylogarithmic in the input size. In the present work, we focus on one-pass streaming algorithms, however, depending on the application, an algorithm may be granted multiple passes over 
the input data in order to further decrease the size of its random access memory. Streaming algorithms find 
applications in situations where the input data is too large to be stored in local memory and random data access is 
too costly. 
For an introduction to streaming 
algorithms, we refer the reader to \cite{mut05}.

\textit{Streaming Algorithms for Partitioning Integer Sequences.}
We assume that our streaming algorithms receive an input stream $X \in \{0, 1, \dots, m\}^n$ of 
length $n$ consisting of integers from the set $\{0, 1, \dots, m\}$. In addition, we assume that the number of partitions
to be created $p$ is stored in the random access memory. All our algorithms make a single pass over the 
input stream. Since we show that any streaming algorithm that computes an exact solution requires $\Omega(n)$ space, 
we consider approximation algorithms. We say that an algorithm is a \textit{$c$-approximation algorithm} if
it computes a partitioning with a bottleneck value which is larger than the optimal bottleneck
value by at most a factor $c$. All our algorithms are deterministic. Nevertheless, we prove space lower bounds for possibly \textit{randomized} algorithms. 
A randomized streaming algorithm is a streaming algorithm that has access to an infinite sequence of random bits, and 
outputs a correct solution with probability at least $1-\delta$, for a small constant $\delta$. 

We consider the following two variants of the problem:
\begin{enumerate}
 \item The streaming algorithm outputs separators $s_0, s_1, \dots, s_p$ that determine the positions of the partitions
in the stream. We abbreviate this variant of the problem by \textsc{Part}.
 \item The streaming algorithm outputs an upper bound on the bottleneck value of an optimal partitioning. We abbreviate this
variant of the problem by \textsc{PartB} (\textsc{B} stands for bottleneck).
\end{enumerate}
There is an important relation between the two variants \textsc{Part} and \textsc{PartB}. A solution to \textsc{PartB}, i.e.,
a bottleneck value, can be transformed into a solution to \textsc{Part}, i.e., the partition boundaries, via one
additional pass over the input stream using the \textsc{Probe} algorithm which is used in many prior works
on this problem, e.g. \cite{i91,k11,kms97}.
\textsc{Probe} takes a bottleneck value $B$ and 
traverses the stream $X$ creating maximal partitions of weight at most $B$. It is easy to see that \textsc{Probe} succeeds if 
and only if $B$ is at least as large as the optimal bottleneck value. For this reason, in the definition of \textsc{PartB}, 
we do not allow a streaming algorithm to output a value that is smaller than the optimal bottleneck value. 
The \textsc{Probe} algorithm is also an important building block in our work, and we discuss it in more detail in 
Section~\ref{sec:probe}.

\textit{Parametric Search Algorithms.} The previously described relation between \textsc{Part} and \textsc{PartB} via
the \textsc{Probe} algorithm suggests the application of the \textit{parametric search} framework to this problem, and, in fact, an
optimal $\Order(n)$ time algorithm for this problem is obtained by Frederickson in \cite{f91} via this approach. 
Parametric search was developed by Megiddo more than $30$ years ago \cite{m78,m83} and has become a standard technique. A parametric search 
problem is one where the optimal solution is the smallest (or largest) value from a set of candidate solutions
of an interval $\{a, a+1, \dots, b\}$ that passes a certain feasibility test. Usually, monotonicity holds for the values in 
$\{a, a+1, \dots, b\}$, i.e., if a value $x \in \{a, a+1, \dots, b\}$ is feasible then all values $\{x, \dots, b\}$ 
(respectively $\{a, \dots, x\}$) are also feasible. In this situation, using binary search, an $\Order(\log (b-a) F)$
algorithm can therefore be obtained immediately, where $F$ is the runtime of the feasibility test.

Applied to the problem of partitioning integer sequences, testing feasibility of a value $B$ corresponds to 
a run of the \textsc{Probe} algorithm. A trivial range for the possible bottleneck values is $\{1, \dots, nm \}$
(we discuss better ranges in Section~\ref{sec:probe}), and, therefore,
an $\Order( \log (mn) n )$ time exact algorithm can be obtained. In \cite{f91}, Frederickson improves this basic idea 
and obtains an $\Order(n)$ time
algorithm by building data structures on the input sequence that allow the speeding up of the feasibility test, and by exploiting 
additional information obtained during the feasibility test in order to further narrow down the search space. 

Parametric search strongly relies on the fact that the choice of parameter for the next feasibility test 
depends on the outcome of previous feasibility tests. However, this is impossible to establish in the one-pass
streaming model, and, in fact, we prove that in one pass and sublinear space it is impossible to compute
the optimal bottleneck value. When relaxing to a $(1+\epsilon)$-approximation, the parametric search framework
allows a strategy that results in a one-pass streaming algorithm with space $\Order( \frac{1}{\epsilon} \log (b - a) S)$, where
$S$ is the space required to perform one feasibility test. We run $\Theta(\frac{ \log (b-a)}{\epsilon})$ feasibility tests 
in parallel, testing the values $(1+\epsilon)^{i} a$ for $i \in \{0, \dots, \frac{1}{\epsilon} \lc \log (b - a) \rc \}$, and
we output the smallest parameter of a successful feasibility test. If $m$, the largest element of the stream, 
and $n$, the length of the stream, are known in 
advance to our algorithm, then in one pass a $(1+\epsilon)$-approximation with space 
$\Order (\frac{1}{\epsilon} \log(\frac{mn}{p}) p\log(mn) )$ can be obtained. Note that this algorithm
requires knowledge of the parameters $m$ and $n$ in advance in order to establish a search space that contains the 
optimal bottleneck value. We regard this algorithm as a baseline algorithm
to which we compare our results, and we discuss it in detail in Section~\ref{sec:probe}. 
 
The main contribution of this paper is the design of a new feasibility test: We design the algorithm
\textsc{ProbeExt} that takes a parameter $B$ and outputs a feasible value $2^i B$ that is at most by a factor 
$2+\epsilon$ larger than the optimal bottleneck value, for small $\epsilon$ values, if the optimal bottleneck value $B^*$
is at least $m/ \epsilon^2$. In some sense, if $B^*$ is sufficiently large, this allows us to run $\Theta(\log (b-a))$ 
feasibility tests simultaneously. Therefore, compared to the previously described method of running
$\Theta(\frac{1}{\epsilon} \log (b-a))$ feasibility tests simultaneously, it is enough to 
run only $\Theta( \frac{1}{\epsilon})$ of our improved feasibility tests, which improves the space complexity by a $\log(b-a)$ factor. 
In order to perform our improved feasibility test, we only require knowledge of $m$ in advance while $n$ may be unknown. 
This is somewhat surprising, since the knowledge of $m$ alone does not allow us to determine an upper limit of the 
search space for the optimal bottleneck value. Our improved feasibility test, however, can recover from a failed
test for $x$, and continue running a test for some $y>x$, without having to restart the stream. 
In order to rule out optimal bottleneck values $B^*$ smaller than $m/\epsilon^2$, we additionally run the 
previously discussed \textsc{Probe}
algorithm for bottleneck values in the range $\{m, \dots, m/ \epsilon^2\}$.
This allows us to obtain an $\Order(\frac{1}{\epsilon} \log( \frac{1}{\epsilon}) S)$ space algorithm, where $S$ is the space for 
the \textsc{ProbeExt} algorithm, and the $\log (\frac{1}{\epsilon})$ factor is necessary to rule out cases in which 
the optimal bottleneck value is smaller than $m/\epsilon^2$. 


\textit{Which parameters are known in advance?} The difficulties of \textsc{Part} and \textsc{PartB} depend strongly on
which parameters are known in advance to the algorithm. Suppose that the total weight $S = \sum_i X_i$ of the stream
is known in advance. Then it is easy to argue that the optimal
bottleneck value $B^*$ is such that $S/p \le B^* \le S$. This narrows down the search space, and running
$\Theta(\frac{p}{\epsilon})$ copies of the \textsc{Probe} algorithm is enough to obtain a $(1+\epsilon)$-approximation.
The knowledge of $S$ provides a lot of information about the input stream. Depending on the application, 
this may be a reasonable assumption, however, for instance in applications where the weights of elements are estimated 
on-the-fly, $S$ is certainly not known.
Our $(1+\epsilon)$-approximation algorithm that applies our improved parametric search strategy requires only knowledge of
$m$ in advance (in fact, any value $x$ with $m \le x \le B^*$ will do), while $n$ and $S$ may be unknown. We point out
that, in this situation, the initial search space for bottleneck values is unknown since the length of the stream is not known to 
the algorithm. 
For the situation where no information about the parameters is granted in advance, we are only able to obtain 
a $2$-approximation. We leave the existence of a $(1+\epsilon)$-approximation for this situation as an open question.

\textit{Communication Complexity.} In this paper, we prove two space lower bounds for one-pass streaming algorithms.
We show that computing an optimal solution requires $\Omega(n)$ space, and we show that
computing a $(1+\epsilon)$-approximation requires $\Omega( \frac{1}{\epsilon} \log n )$ space (for any $\epsilon = \Order(n^{1-\gamma})$
for any $\gamma > 0$), 
showing that the $\frac{1}{\epsilon}$ factor is necessary for obtaining a $(1+\epsilon)$-approximation. Proving space 
lower bounds for streaming algorithms is often done via communication 
complexity, and we follow this route in this paper. A one-way two-party communication problem consists of two players, 
usually denoted by Alice and Bob, who
hold inputs $Y$ and $Z$, respectively. Alice sends a single message to Bob who, upon reception, computes the output
of the protocol as a function of Alice's message and his input. The relation to streaming algorithms is as follows: 
A streaming algorithm for a problem $P$ on data stream $X = Y \circ Z$ ($Y$ concatenated with $Z$) with space $s$ can be used 
as a one-way two-party 
communication protocol for problem $P$ with maximal message size $s$ where player one holds input 
$Y$ and player two holds input $Z$. Conversely, a lower bound on the one-way two-party communication complexity of 
a problem $P$ is also a lower bound on the space requirements for any streaming algorithm for problem $P$.
For an introduction to communication complexity, we refer the reader to \cite{kn97}.

\textit{Summary Of Our Results.} Our first result is an impossibility result. We show that 
computing an exact solution to either \textsc{Part} or \textsc{PartB} in one pass requires $\Omega(n)$ space even for randomized algorithms. 
We therefore study approximation algorithms for the problem. 
We show that if the maximal value $m$ of the stream $X$ is known in advance, then
there is a deterministic $(1+\epsilon)$-approximation algorithm for both \textsc{Part} and \textsc{PartB} 
using space 
$\Order \left( \frac{1}{\epsilon} \log (\frac{1}{\epsilon}) \log ({mn^p})  \right)$ 
and $\Order \left( \frac{1}{\epsilon} \log (\frac{1}{\epsilon}) \log \frac{mn}{\epsilon} \right)$,
respectively. These algorithms do not require
knowledge of $n$ or of the total weight $S$ of the stream in advance.
Then, we consider the hardest case when the algorithm has no information about $m, n$ or $S$. 
We design a $2$-approximation algorithm for \textsc{Part} using space $\Order(p \log (mn) )$,
and point out a simple $2$-approximation algorithm for \textsc{PartB} using space $\Order(\log (m n))$. 
As a counterpoint to these upper bounds, we show that any possibly randomized streaming algorithm
that computes a $(1+\epsilon)$-approximation to \textsc{Part} requires $\Omega( \frac{1}{\epsilon} \log n)$ space
for any $\epsilon = \Order(n^{1-\gamma})$ and any $\gamma > 0$.
As our algorithms have a $\frac{1}{\epsilon} \log \frac{1}{\epsilon}$ dependence on $\epsilon$, our lower bound shows 
that this dependence is optimal up to a logarithmic factor on $\frac{1}{\epsilon}$.
Our results are summarized in Figure~\ref{fig:results}.

\begin{figure}[ht]
\small
\begin{center}
 \begin{tabularx}{\textwidth}{|ccccccX|}
\hline
$\, \, \, \, m \, \, \, \,$ & $\, \, \, \, n \, \, \, \,$ & $\, \, \, \, S \, \, \, \,$ & $\, \, $ Approximation $\, \, $ & 
Space & $\quad \quad$ & Remark \\
\hline 
\multicolumn{7}{|l|}{\textsc{Part}:} \\
 &  &   & exact & $\Omega(n)$ & & Lower bound (Theorem~\ref{thm:lb-exact}) \\
- & - & ! & $1+\epsilon$ & $\Order \left( \frac{1}{\epsilon}\log(p) \log (mn^p) \right)$ & & Baseline (Theorem~\ref{thm:probe}) \\
! & ! & - & $1+\epsilon$ & $\Order \left( \frac{1}{\epsilon} (p \log^2(n) + \log^2(m)) \right)$ & & Baseline (Theorem~\ref{thm:probe2}) \\


!  & - & - & $1+\epsilon$ & $\Order \left( \frac{1}{\epsilon} \log (\frac{1}{\epsilon}) \log ({mn^p})  \right)$ & & (Theorem~\ref{thm:known-m})\\
- & - & - & $2$ & $\Order(p \log (mn) )$ & & (Theorem~\ref{thm:unknown-m-part}) \\

  & &   & $1+\epsilon$ & $\Omega(\frac{1}{\epsilon} \log n )$ & & Lower bound (Theorem~\ref{thm:lb-approx}) \\
\multicolumn{7}{|l|}{$ $} \\
\multicolumn{7}{|l|}{\textsc{PartB}:} \\
  & &   & exact & $\Omega(n)$ & & Lower bound (Theorem~\ref{thm:lb-exact}) \\
- & -  & !  & $1+\epsilon$ & $\Order \left( \frac{1}{\epsilon}\log(p) \log (mn) \right)$ & & Baseline (Theorem~\ref{thm:probe}) \\
! & !  & -  & $1+\epsilon$ & $\Order \left( \frac{1}{\epsilon} \log^2(mn) \right)$ & & Baseline (Theorem~\ref{thm:probe2}) \\
!  & - & - & $1+\epsilon$ & $\Order \left( \frac{1}{\epsilon} \log (\frac{1}{\epsilon}) \log (\frac{mn}{\epsilon}) \right)$ & & (Theorem~\ref{thm:known-m}) \\



- & - & - & $2$ & $\Order( \log(mn) )$ & & (Theorem~\ref{thm:unknown-m-partb}) \\
\hline
\end{tabularx}
\caption{Overview of our results. In the first three columns we indicate whether advance knowledge of the maximum weight
$m$, the length of the stream $n$, or the total weight $S$ is required by the algorithm  (the ! sign indicates that the respective quantity is
required).
\label{fig:results}}
\end{center}
\end{figure}

%
%


\vspace{-0.9cm}

\textit{Further Related Work.} The problem of partitioning integer sequences has been extensively studied in the offline setting, 
as early as 1988 by Bokhari \cite{b88}, who presented an exact algorithm
with time complexity $\Order(n^3 p)$.  Significant progress has since been made on the problem, and the best current algorithm
runs in time $\Order(n)$ independently of $p$ \cite{f91}.
Previous works use techniques such as dynamic programming, iterative refinement of a partitioning, and parametric 
search. Most ideas from previous works are not applicable in the streaming model
since they require a more flexible data access scheme. The work of Iqbal \cite{i91} is closest to our work because it 
considers approximation algorithms. Furthermore, some of his techniques, such as a parametric search for the optimal bottleneck 
value, are in their basic features similar to our work. 
To the best of our knowledge, our work is the first that rigorously follows the parametric search framework in the streaming model.

\textit{Outline.} First,
we discuss the \textsc{Probe} algorithm and we prove the results for our baseline method in Section~\ref{sec:probe}. 
In Section~\ref{sec:probe-ext}, we discuss the \textsc{ProbeExt} algorithm, which constitutes the main algorithmic 
technique in this paper. 
In Section~\ref{sec:known-m}, we present algorithms for the case when $m$ is known in advance, and in 
Section~\ref{sec:unknown-m}, we present algorithms for the case when $m$ is not known in advance. 
Then, we present our space lower bounds in Section~\ref{sec:lbs}. 
We present our $\Omega(n)$ space lower bound in Subsection~\ref{sec:lb-exact}. Then, in Subsection~\ref{sec:lb-approx}, 
we prove a space lower bound for approximation algorithms. 

\textit{Missing Proofs.} Due to space restrictions, many proofs have been moved to the appendix. Lemmas and theorems with
deferred proofs are marked with $(*)$.


\section{The \textsc{Probe} Algorithm} \label{sec:probe}

\begin{minipage}{9.7cm}
 An important building block for our algorithms is \textsc{Probe} (Algorithm~\ref{algo:probe}), 
and its variant \textsc{ProbeB} (not explicitly shown). These algorithms have been used in previous works on this problem, e.g. \cite{i91}.
\textsc{Probe} takes parameters $B$ and $p$, makes one pass over the input
stream and sets up partition separators such that partitions do not exceed a weight of $B$ but are of maximal
size. \textsc{ProbeB} performs the same task as \textsc{Probe}, but
it does not store the actual separators, and returns only a boolean value indicating whether the algorithm succeeded or failed.

We state now upper and lower bounds on the optimal bottleneck value $B^*$. Then, we use these bounds in order
to derive a bound on the space complexity of \textsc{Probe} and
\end{minipage}
\hfill
\begin{minipage}[c]{5.7cm}
\vspace{-1.5cm}
 \begin{algorithm}[H]
\caption{\textsc{Probe}($B, p$) \label{algo:probe}}
\begin{algorithmic}
 \STATE $I \gets 1$ \COMMENT{current element index}
 \STATE $P \gets 1$ \COMMENT{current separator index}
 \STATE $W \gets 0$ \COMMENT{current partition weight}
 \WHILE{input stream not empty}
  \STATE \textbf{if} $P > p$ \textbf{then} FAIL \textbf{end if}
  \STATE $x \gets $ next integer from stream
  \STATE $I \gets I+1$
  \STATE \textbf{if} $x> B$ \textbf{then} FAIL \textbf{end if}
  \STATE \textbf{if} $W+x \leq B$ \textbf{then} $W \gets W+x$
  \STATE \textbf{else} $s_P \gets I$, \ $P \gets P+1$, \ $W \gets x$  
  \STATE \textbf{end if}
\ENDWHILE
\RETURN $(1, s_1, \dots, s_{p-1}, I)$
\end{algorithmic}
\end{algorithm}
\end{minipage}

\vspace{0.1cm}

\noindent \textsc{ProbeB}. Finally, we show how \textsc{Probe} and \textsc{ProbeB}
can be used to obtain a $(1+\epsilon)$-approximation.
In the following, let $S = \sum_i X_i$ denote the weight of the entire input integer sequence. 

\begin{lemma} \label{lem:opt-bottleneck-bound}
Let $B^*$ denote the bottleneck value of an optimal partitioning. Then:
 \begin{eqnarray*}
\max \left\{ \lc \frac{S}{p} \rc, m \right\} \le & B^* & \le \lf \frac{S + (p-1)  m}{p} \rf <  \lf \frac{nm}{p} + m \rf.
 \end{eqnarray*}
\end{lemma}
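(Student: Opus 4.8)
The plan is to prove the four inequalities separately, using simple averaging and extremal arguments. Throughout, fix an optimal partitioning with separators $s_0 \le s_1 \le \dots \le s_p$ and bottleneck value $B^*$.

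\textbf{Lower bound $B^* \ge \lc S/p \rc$:} Since the $p$ partitions together contain all $n$ elements, their weights sum to exactly $S$. Hence the maximum weight is at least the average $S/p$, and since $B^*$ is an integer (being a sum of integers), $B^* \ge \lc S/p \rc$.

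\textbf{Lower bound $B^* \ge m$:} Let $X_k = m$ be a maximum element. Whatever partition contains $X_k$ has weight at least $X_k = m$, so $B^* \ge m$.

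\textbf{Upper bound $B^* \le \lf (S + (p-1)m)/p \rf$:} Here I would exhibit a partitioning whose bottleneck value is at most this quantity, and invoke optimality of $B^*$. The natural candidate is the greedy partitioning produced by a \textsc{Probe}-style sweep with threshold $T := \lf (S+(p-1)m)/p \rf$: scan left to right, keeping elements in the current partition as long as the running weight stays $\le T$, otherwise start a new partition. Since $T \ge \lf (S + (p-1)m)/p \rf \ge m$ (as $S \ge 0$ forces $(S+(p-1)m)/p \ge (p-1)m/p$, and actually $S\ge m$ whenever there is a nonzero element — and if $S<m$ the bound is trivial), no single element exceeds $T$, so greedy never fails on a single element. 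The key observation is that every partition except possibly the last is ``full'' in the sense that adding the first element of the next partition would exceed $T$; that element is at most $m$, so each such partition has weight $> T - m$. If greedy used more than $p$ partitions, the first $p$ of them would each have weight $> T - m \ge (S+(p-1)m)/p - 1 - m = (S - m)/p - 1$, summing to more than $S - m - p \ge S - \text{(total of remaining elements)}$... — I need to be a bit careful with the floor and the off-by-one, but the essential counting is: $p$ partitions each of weight $> T-m$ would account for total weight $> p(T-m) \ge p\cdot\frac{S+(p-1)m}{p} - p - pm = S - m - p$, which still leaves room, so this crude form isn't quite tight. The cleaner route: show greedy uses at most $p$ partitions by contradiction — if it used $p+1$ or more, the first $p$ partitions each have weight at least $T - m + 1$ (the next element that caused the split is $\le m$, and weights are integers), contradicting that the total is $S$ once we verify $p(T-m+1) > S$, i.e. $T > S/p + m - 1$, i.e. $T \ge \lc S/p \rc + m - 1$; and indeed $\lf (S+(p-1)m)/p \rf = \lf S/p + m - m/p \rf \ge \lc S/p \rc + m - 1$ since $\lf S/p - m/p \rf \ge \lf S/p \rf - \lf m/p \rf - 1 \ge \lc S/p\rc - 1 - \lf m/p\rf$... this is where the careful floor bookkeeping lives. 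So greedy produces a valid partitioning of $\le p$ blocks each of weight $\le T$, giving $B^* \le T$.

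\textbf{Strict inequality $\lf (S+(p-1)m)/p \rf < \lf nm/p + m \rf$:} Since $X \in \{0,\dots,m\}^n$, we have $S \le nm$, so $(S + (p-1)m)/p \le (nm + (p-1)m)/p = nm/p + m - m/p < nm/p + m$. Taking floors gives $\le$, and the final strictness needs $m/p \ge 1$ i.e. $p \le m$ (or a strict inequality somewhere) — more robustly, $\lf (S+(p-1)m)/p\rf \le \lf nm/p + m - m/p \rf \le \lf nm/p + m \rf$, and for the strict version one observes $nm/p + m - m/p < nm/p + m$ whenever $m \ge 1$, which drops the value by at least... actually the clean statement is just $\lf (S+(p-1)m)/p \rf \le \lf (nm + (p-1)m)/p \rf < \lf nm/p + m \rf$ when $p \ge 2$ so that $(p-1)m/p < m$ strictly and the gap is at least $m/p \ge$ something — I would state it as $\le \lf nm/p + m - 1 \rf + \dots$; in any case this last step is purely arithmetic bookkeeping with $S \le nm$ and is not where the content lies.

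\textbf{Main obstacle.} The only nontrivial part is the upper bound $B^* \le \lf(S+(p-1)m)/p\rf$, and within it, the floor/ceiling bookkeeping in the counting argument showing greedy with threshold $T$ uses at most $p$ partitions. The conceptual idea (each non-final greedy partition has weight $> T - m$, so too many of them would overspend the budget $S$) is immediate; making the integer-rounding constants line up exactly with $\lf(S+(p-1)m)/p\rf$ rather than an off-by-one weaker bound is the fussy step. Everything else is a one-line averaging or substitution argument.
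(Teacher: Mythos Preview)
Your lower bounds match the paper exactly. The substantive divergence is in the upper bound $B^* \le \lf (S+(p-1)m)/p \rf$, where you and the paper run \emph{different} greedy procedures.

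You run the standard \textsc{Probe} with \emph{upper} threshold $T=\lf (S+(p-1)m)/p\rf$ and try to show by counting that it never opens a $(p+1)$st partition. The paper instead runs a greedy with \emph{lower} threshold $B=(S-m)/p$: close the current partition as soon as its weight reaches $B$. Each non-final partition then has weight in $[B,\,B+m)$, hence at most $\lf B\rf+m=\lf(S+(p-1)m)/p\rf$; and after $p-1$ separators have been placed the residual weight is at most $S-(p-1)B=(S+(p-1)m)/p$. No floor bookkeeping is needed beyond this.

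Your counting argument, as written, does not close. From ``$\ge p+1$ partitions'' you get $w_i\ge T+1-m$ for $i\le p$ and $w_{p+1}\ge 1$, hence $pT\le S+p(m-1)-1$. From the definition of $T$ you only get $pT\ge S+(p-1)m-(p-1)$. Combining the two yields $m\ge 2$, not a contradiction; the slack introduced by replacing each $x_{i+1}$ by $m$ and by the floor in $T$ is exactly enough to kill the argument. So the ``fussy step'' you flag is not just bookkeeping: the particular inequality you set up is genuinely too weak. (The \emph{claim} that \textsc{Probe}$(T)$ succeeds is of course true, since it is equivalent to $B^*\le T$; but proving it directly by this route requires a sharper estimate than $w_i\ge T+1-m$.) The lower-threshold greedy sidesteps the issue entirely because it bounds each partition's weight \emph{from above} directly, rather than bounding them from below and summing.

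For the final strict inequality, both you and the paper wave hands; note that the strictness actually fails in edge cases (e.g.\ $n=3$, $m=1$, $p=2$ gives $2<2$), so do not invest effort there --- the non-strict version via $S\le nm$ is all that is ever used.
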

\begin{proof}
For the lower bound $\lc \frac{S}{p} \rc$, observe that the weight of each partition is at most $B^*$, 
so their sum is at most $p \cdot B^*$. The integer $m$ is a trivial lower bound since an element of weight $m$ has to be 
part of some partition.

For the upper bound $\lf \frac{S + (p-1)  m}{p} \rf$, we construct a partitioning that fulfills this property. 
Assume that we know $S$ and $m$ in advance. 
Partition the stream greedily, placing a separator when the weight of the current partition is at least $B = \frac{S -m}{p}$. 
The weight of the current partition is thus at most $\lf B \rf + m =  \lf \frac{S + (p-1) \cdot m}{p} \rf$. After placing 
the separators $s_0, \dots, s_{p-1}$, the sum of the remaining elements (the weight of the last partition) is at most 
$S - (p-1)B = \frac{S + (p-1) \cdot m}{p}$. 

For the upper bound $\lf \frac{nm}{p} + m \rf$, note that $S \le nm$, and therefore, 
$\lf \frac{S + (p-1)  m}{p} \rf < \lf \frac{nm}{p} + m \rf$.
This implies the result. \qed
\end{proof}

The following lemma on the space requirements 
is easily verifiable and uses the previous bounds on the optimal bottleneck value $B^*$
of Lemma~\ref{lem:opt-bottleneck-bound}.

\begin{lemma} \label{lem:probe}
 \textsc{Probe}($B, p$) and \textsc{ProbeB}($B, p$) succeed if and only if the optimal bottleneck value is 
smaller or equal to $B$. \textsc{Probe} uses space $\Order(p \log n + \log B + \log m) = \Order(\log(mn^p))$ and \textsc{ProbeB} uses
space $\Order( \log p + \log B + \log m) = \Order(\log(mn))$. \qed
\end{lemma}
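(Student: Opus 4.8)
The plan is to establish the correctness claim by a standard greedy-exchange argument and then to read the space bounds directly off the variables the two procedures maintain. For the ``only if'' direction, suppose \textsc{Probe}$(B,p)$ succeeds and outputs separators $1 = s_0 \le s_1 \le \dots \le s_p = n+1$ (the argument is identical for \textsc{ProbeB}). The pseudocode maintains the invariant that the running partition weight $W$ never exceeds $B$: it starts at $0$, an element $x$ is appended only when $W+x \le B$, and a fresh partition is opened with weight $W \gets x$, where $x \le B$ since otherwise the \textbf{FAIL} line $x > B$ would have fired. Hence every output partition has weight at most $B$; since the run did not fail, it used at most $p$ partitions, so a partitioning of bottleneck value $\le B$ exists and $B^* \le B$.

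For the converse, assume $B^* \le B$. First note \textsc{Probe} cannot fail on the line $x > B$, because by Lemma~\ref{lem:opt-bottleneck-bound} we have $B \ge B^* \ge m \ge X_i$ for all $i$. Now fix an optimal partitioning $1 = t_0 \le t_1 \le \dots \le t_p = n+1$, every block of which has weight at most $B^* \le B$, and let $g_0 = 1 \le g_1 \le g_2 \le \dots$ be the separators \textsc{Probe} would set (temporarily ignoring the $P > p$ check). I would prove by induction on $j$ that $g_j \ge t_j$ as long as $g_j$ is defined. The base case $g_0 = t_0 = 1$ is immediate. For the step, assume $g_j \ge t_j$; then the block $[t_j, t_{j+1})$ has weight $\le B$, so its suffix $[g_j, t_{j+1})$ has weight $\le B$ as well, and since \textsc{Probe} extends the current partition as far as the budget $B$ allows before opening the next one, it does not place a separator before index $t_{j+1}$, i.e.\ $g_{j+1} \ge t_{j+1}$. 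In particular $g_p \ge t_p = n+1$, so \textsc{Probe} exhausts the stream using at most $p$ partitions, never triggering the $P > p$ \textbf{FAIL}, and thus succeeds; the same holds for \textsc{ProbeB}.

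For the space bounds it suffices to tally the stored quantities. \textsc{Probe} keeps the element index $I \in \{1,\dots,n+1\}$, the separator counter $P \in \{1,\dots,p+1\}$, the running weight $W \in \{0,\dots,B\}$, the array of separators $s_1,\dots,s_{p-1} \in \{1,\dots,n+1\}$, and one input element $x \in \{0,\dots,m\}$ at a time, for a total of $\Order(p\log n + \log B + \log m)$ bits. We may assume $B = \Order(nm)$ (by Lemma~\ref{lem:opt-bottleneck-bound} any larger $B$ is trivially feasible, so \textsc{Probe} need not be invoked), so $\log B = \Order(\log(mn))$ and the bound collapses to $\Order(p\log n + \log m) = \Order(\log(mn^p))$. \textsc{ProbeB} is the same except that it discards the separator array, which removes the $\Order(p\log n)$ term and leaves $\Order(\log p + \log B + \log m) = \Order(\log(mn))$.

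The only non-mechanical ingredient is the greedy-dominance induction $g_j \ge t_j$ in the ``if'' direction; I expect this to be the crux, although it is a textbook exchange argument and should go through cleanly provided the maximality of the partitions produced by \textsc{Probe} is stated carefully. Everything else — the weight invariant and the counting of stored values — is routine bookkeeping, which is presumably why the lemma is described as ``easily verifiable''.
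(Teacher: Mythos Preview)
Your proposal is correct and fills in exactly the details the paper omits: the paper states the lemma is ``easily verifiable'' using Lemma~\ref{lem:opt-bottleneck-bound} and gives no proof beyond the \qed, so your greedy-dominance induction for correctness and the variable tally for space are precisely the expected (and standard) arguments. There is nothing to compare against; your write-up is simply more explicit than the paper's.
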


We show now that if $S$ is known in advance, using Lemma~\ref{lem:opt-bottleneck-bound}, \textsc{Probe} (respectively \textsc{ProbeB}) 
can be used to obtain a $(1+\epsilon)$-approximation algorithm for \textsc{Part} (resp. \textsc{PartB}).
As already mentioned in the introduction, this result is obtained by running $\Theta(\frac{\log p}{\log (1 + \epsilon)} )$
copies of \textsc{Probe} in parallel. For details, see the proof of Theorem~\ref{thm:probe} in the appendix.

\ignore{
\begin{theorem} \label{thm:probe}
 Suppose that $\epsilon < C$ for a positive constant $C$. If $S$ and $m$ are known in advance, then by running $\Theta(\log (pm) / \epsilon)$ copies of
\textsc{Probe} (resp.\ \textsc{ProbeB}) we can obtain a $(1+\epsilon)$-approximation algorithm for \textsc{Part} (resp.\ \textsc{PartB}).
The space requirements are 
\begin{itemize}
 \renewcommand{\labelitemi}{$\bullet$}
 \item $\Order \left( \log (pm) \log (mn^p) / \epsilon \right)$ for \textsc{Part}, and 
 \item $\Order \left(\log (pm) \log (mn) / \epsilon  \right)$ for \textsc{PartB}.
\end{itemize}
\end{theorem}
\begin{proof}
 Let $S=\sum_i X_i$. Let $C' = \frac{C}{\log (1+C)}$. We run $\Theta(\frac{\log(pm)}{\epsilon})$ copies of 
\textsc{Probe} (resp. \textsc{ProbeB}) in parallel, with bottleneck values 
$\min \{\frac{S}{p} (1 + \epsilon)^i, \frac{S}{p} + m \}$ for 
$i \in \{ 0, 1, \dots, C' \cdot \frac{\log (pm)}{\epsilon} \}$. 
We return the successful partitioning with the smallest bottleneck value. Note that 
\begin{eqnarray*}
\frac{S}{p} (1 + \epsilon)^{C' \cdot \frac{\log (pm)}{\epsilon}} & = & \frac{S}{p} (pm)^{C' \cdot \frac{\log(1+\epsilon)}{\epsilon}} \ge \frac{S}{p} \cdot pm > \frac{S}{p} + m, 
\end{eqnarray*}
 and therefore there is always at least one run of 
\textsc{Probe} (resp. \textsc{ProbeB}) that succeeds. 
Let $i'$ be the value of $i$ of the run of \textsc{Probe} (resp. \textsc{ProbeB}) 
for which we returned the result, and let $B$ denote this value. Denote the optimal bottleneck value by $B^*$.
Suppose first that $i' = 0$. Then $B = S/p$ and since $B^* \ge S/p$, our algorithm found an optimal solution. Suppose therefore
that $i' > 0$. Then
\begin{eqnarray*}
 \frac{S}{p} (1 + \epsilon)^{i'-1} \le B^* \le \frac{S}{p} (1 + \epsilon)^{i'} = B,
\end{eqnarray*}
and therefore $B^* (1+\epsilon) \ge B$ which proves the approximation ratio.
Concerning the space requirements, we multiply the space requirements of \textsc{Probe} (resp. \textsc{ProbeB})
by $\log (pm) / \epsilon$. The maximal bottleneck value for which we run \textsc{Probe} (resp. \textsc{ProbeB}) 
is $\Order(\frac{mn}{p})$. For \textsc{Part}, we obtain 
$\Order \left( \frac{\log(pm)}{\epsilon } \cdot( p \log n + \log (\frac{mn}{p} )) \right)$
which simplifies to $\Order \left( \frac{\log(pm)}{\epsilon } \cdot( p \log n + \log m ) \right)$, and for
\textsc{PartB}, we obtain $\Order \left( \frac{\log(pm)}{\epsilon } \cdot( \log p  + \log (\frac{mn}{p} )) \right)$
which simplifies to $\Order \left( \frac{\log(pm)}{\epsilon }  \log (mn) \right)$ since $p \le n$.  The result follows. \qed
\end{proof}
}

\begin{theorem} \label{thm:probe}
 For any positive $\epsilon = \Order (1)$, if $S$ is known in advance, then by running $\Theta( \log(p) / \epsilon )$ copies of
\textsc{Probe} (resp.\ \textsc{ProbeB}) we can obtain a $(1+\epsilon)$-approximation algorithm for \textsc{Part} (resp.\ \textsc{PartB}).
The space requirements are 
\begin{itemize}
 \renewcommand{\labelitemi}{$\bullet$}
 \item $\Order \left( \log(p) \log (mn^p) / \epsilon \right)$ for \textsc{Part}, and 
 \item $\Order \left(\log(p) \log (mn) / \epsilon  \right)$ for \textsc{PartB}.
\end{itemize}
\end{theorem}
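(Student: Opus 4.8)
The plan is to pin down the optimal bottleneck value $B^*$ to a short interval and then cover that interval by a geometric progression of guesses, each of which is tested by one copy of \textsc{Probe} (resp.\ \textsc{ProbeB}) run in parallel on the stream. By Lemma~\ref{lem:opt-bottleneck-bound} we have $\lc S/p \rc \le B^*$, and since the partitioning $s_1 = \dots = s_{p-1} = n+1$ has bottleneck value exactly $S$, we also get $B^* \le S$. Hence $B^*$ lies in the interval $[\lc S/p \rc, S]$, whose endpoints differ by a factor of at most $p$. The point worth stressing is that this range, as well as the number of guesses needed to cover it, depends only on $S$, $p$, and $\epsilon$: no advance knowledge of $m$ or $n$ is needed to launch the parallel copies, and the only place $n$ enters at all is in detecting the end of the stream.

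Concretely, I would set $a := \lc S/p \rc$ and run, in parallel, \textsc{Probe}$(a(1+\epsilon)^i, p)$ (resp.\ \textsc{ProbeB}$(a(1+\epsilon)^i, p)$) for $i = 0, 1, \dots, t$, where $t := \lc \log_{1+\epsilon} p \rc$. Since $a(1+\epsilon)^t \ge ap \ge S \ge B^*$, Lemma~\ref{lem:probe} guarantees that the copy indexed $t$ succeeds, so there is a smallest index $i'$ of a successful copy; I output its value $B := a(1+\epsilon)^{i'}$, together with the separators returned by that copy in the \textsc{Part} case. For the approximation guarantee: if $i' = 0$ then $B = \lc S/p \rc \le B^* \le B$ (the last inequality by success and Lemma~\ref{lem:probe}), so $B = B^*$; if $i' > 0$ then the copy indexed $i'-1$ failed, so $B^* > a(1+\epsilon)^{i'-1}$, hence $B = (1+\epsilon)\, a(1+\epsilon)^{i'-1} < (1+\epsilon) B^*$. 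In both cases $B \le (1+\epsilon) B^*$.

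For the space bound, observe that $\epsilon = \Order(1)$ gives $\ln(1+\epsilon) = \Theta(\epsilon)$, so $t = \Theta(\log(p)/\epsilon)$ and we run $\Theta(\log(p)/\epsilon)$ copies. Each guess is at most $a(1+\epsilon)^t \le (S/p + 1)(1+\epsilon)p = \Order(S + p) = \Order(mn)$, so by Lemma~\ref{lem:probe} a single \textsc{Probe} copy uses $\Order(p\log n + \log(mn)) = \Order(\log(mn^p))$ space and a single \textsc{ProbeB} copy uses $\Order(\log p + \log(mn)) = \Order(\log(mn))$ space. Multiplying by the number of copies yields $\Order(\log(p)\log(mn^p)/\epsilon)$ for \textsc{Part} and $\Order(\log(p)\log(mn)/\epsilon)$ for \textsc{PartB}.

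I expect the argument to be essentially routine, with the only subtleties being bookkeeping: making sure the trivial upper bound $B^* \le S$ (rather than the sharper, $m$-dependent bound of Lemma~\ref{lem:opt-bottleneck-bound}) is what keeps the search range, and hence the copy count, independent of $m$; and noting that the estimate $t = \Theta(\log(p)/\epsilon)$ relies on $\epsilon$ being bounded by a constant. The degenerate case $S = 0$ (all elements zero, $B^* = 0$) is handled separately by simply outputting $0$.
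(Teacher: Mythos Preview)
Your proposal is correct and follows essentially the same approach as the paper: both use the range $[S/p,\,S]$ for $B^*$, cover it by a geometric grid of ratio $1+\epsilon$, run the corresponding $\Theta(\log(p)/\epsilon)$ copies of \textsc{Probe} in parallel, and output the smallest successful guess. The only cosmetic differences are that you start from $a=\lc S/p\rc$ rather than $S/p$, you make the trivial bound $B^*\le S$ explicit via the one-block partition, and you separately dispose of the $S=0$ edge case.
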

\begin{proof}
Let C = $\lc \frac{\log{p}}{\log(1+\epsilon)} \rc$. We run $C + 1$ copies of 
\textsc{Probe} (resp. \textsc{ProbeB}) in parallel, with bottleneck values 
$\frac{S}{p} (1 + \epsilon)^i$ for 
$i \in \{ 0, 1, \dots, C \}$. 
We return the successful partitioning with the smallest bottleneck value. Note that 
\begin{eqnarray*}
\frac{S}{p} (1 + \epsilon)^C \ge \frac{S}{p} \cdot p = S. 
\end{eqnarray*}
Let $B^*$ denote the optimal bottleneck value. Since $B^* \leq S$, there is always at least one run of 
\textsc{Probe} (resp. \textsc{ProbeB}) that succeeds, due to Lemma~\ref{lem:probe}. 
Let $B=\frac{S}{p} (1 + \epsilon)^{i'}$ be the returned bottleneck value.
Suppose first that $i' = 0$. Then $B = S/p$ and since $B^* \ge S/p$, we found the optimum. Otherwise $i' > 0$. Then
\begin{eqnarray*}
 \frac{S}{p} (1 + \epsilon)^{i'-1} \le B^* \le \frac{S}{p} (1 + \epsilon)^{i'} = B,
\end{eqnarray*}
and therefore $B^* (1+\epsilon) \ge B$ which proves the approximation ratio.

Observe that $C = \Theta (\log(p) / \epsilon)$ for any positive $\epsilon= \Order(1)$. The largest bottleneck value for which we run \textsc{Probe} (resp. \textsc{ProbeB})
is $\Order({mn})$. For an upper bound on the total space requirement, we multiply the maximal space requirement of a single copy of \textsc{Probe} or \textsc{ProbeB} (Lemma~\ref{lem:probe}) by the number of copies $C+1$. The result follows. \qed
\end{proof}

Finally, if $S$ is unknown to the algorithm but $m$ and $n$ are known, then the following holds:
\begin{theorem} \label{thm:probe2}
 For any positive $\epsilon = \Order (1)$, if $m$ and $n$ are known in advance, then by running 
$\Theta( \log(mn) / \epsilon )$ copies of \textsc{Probe} (resp.\ \textsc{ProbeB}) we can obtain a 
$(1+\epsilon)$-approximation algorithm for \textsc{Part} (resp.\ \textsc{PartB}).
The space requirements are
$\Order \left( (p \log^2 n + \log^2 m) / \epsilon \right)$ for \textsc{Part}, and 
$\Order \left(\log^2(mn) / \epsilon  \right)$ for \textsc{PartB}.
\end{theorem}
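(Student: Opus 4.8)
The plan is to reuse the proof of Theorem~\ref{thm:probe} almost verbatim, changing only the interval on which the geometric search is performed. Since $S$ is unknown we cannot use $[S/p,S]$, but $m$ and $n$ are known and $S \le nm$, so Lemma~\ref{lem:opt-bottleneck-bound} gives $m \le B^* < \lf nm/p + m \rf$, an interval whose endpoints have ratio at most $n/p + 1 \le n$ (using $2 \le p \le n$). I would fix $C = \lc \log n / \log(1+\epsilon) \rc$ and, in a single pass, run $C+1$ copies of \textsc{Probe} (resp.\ \textsc{ProbeB}) in parallel, the $i$-th with bottleneck parameter $B_i := m(1+\epsilon)^i$ for $i \in \{0,\dots,C\}$, finally outputting the partitioning (resp.\ the value $B_i$) of the copy with the smallest successful index $i'$. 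Since $B_C \ge mn > B^*$, Lemma~\ref{lem:probe} ensures the last copy succeeds, so $i'$ exists, and $C+1 = \Theta(\log n/\epsilon) = \Order(\log(mn)/\epsilon)$ for $\epsilon = \Order(1)$.

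Correctness follows from Lemma~\ref{lem:probe}: copy $i$ succeeds iff $B^* \le B_i$, so $B_{i'}$ is the least grid point $\ge B^*$. If $i'=0$ then $m \le B^* \le m$, so $B^* = B_0$ is returned exactly. If $i'>0$ then $B_{i'-1} < B^* \le B_{i'}$, hence $B_{i'} = (1+\epsilon)B_{i'-1} < (1+\epsilon)B^*$; thus the returned value lies in $[B^*,(1+\epsilon)B^*)$ --- a $(1+\epsilon)$-approximation that never underestimates $B^*$, as \textsc{PartB} requires --- and for \textsc{Part} the separators returned by that run of \textsc{Probe} realize a partitioning of bottleneck value at most $B_{i'} < (1+\epsilon)B^*$.

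For the space bound I would multiply the per-copy bounds of Lemma~\ref{lem:probe}, with each parameter at most $B_i \le 2mn$, by the number $C+1 = \Theta(\log n/\epsilon)$ of copies. A copy of \textsc{Probe} uses $\Order(p\log n + \log(mn)) = \Order(p\log n + \log m)$ space, so the total is $\Order\left(\frac{1}{\epsilon}(p\log^2 n + \log n\log m)\right)$, which simplifies to $\Order\left(\frac{1}{\epsilon}(p\log^2 n + \log^2 m)\right)$ via $\log n\log m \le \frac{1}{2}(\log^2 n + \log^2 m)$. A copy of \textsc{ProbeB} uses $\Order(\log p + \log(mn)) = \Order(\log(mn))$ space (since $p \le n$), so the total is $\Order\left(\frac{1}{\epsilon}\log n\log(mn)\right) = \Order\left(\frac{1}{\epsilon}\log^2(mn)\right)$. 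The only point needing a little care is this last simplification for \textsc{Part}: the number of copies must be bounded by $\Order(\log n/\epsilon)$ using the ratio-$n$ length of the search interval from Lemma~\ref{lem:opt-bottleneck-bound}, rather than by the cruder $\Order(\log(mn)/\epsilon)$, since the latter would leave a spurious $p\log^2 m/\epsilon$ term; everything else is a direct transcription of Theorem~\ref{thm:probe}.
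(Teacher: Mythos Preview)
Your proposal is correct and follows exactly the approach the paper has in mind: the paper omits the proof, stating only that it is ``essentially equivalent to the proof of Theorem~\ref{thm:probe} using the initial search space $\{m, m+1, \dots, mn\}$,'' and this is precisely what you do. Your observation that the number of copies must be taken as $\Theta(\log n/\epsilon)$ (the ratio of the endpoints of $[m,mn]$) rather than the looser $\Theta(\log(mn)/\epsilon)$ from the theorem statement, in order to avoid a spurious $p\log^2 m$ term in the \textsc{Part} bound, is a valid and necessary refinement.
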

The proof of Theorem~\ref{thm:probe2} is omitted since it is essentially equivalent to the proof of Theorem~\ref{thm:probe}
using the initial search space $\{m, m + 1, \dots, mn \}$.



\section{The \textsc{ProbeExt} Algorithm} \label{sec:probe-ext}
\begin{minipage}{9.3cm}
In this section, we present a one-pass streaming algorithm that only requires the knowledge
of $m$ in advance. We denote this algorithm by \textsc{ProbeExt}, and similar to the \textsc{Probe}
algorithm, we introduce a counterpart \textsc{ProbeExtB} that does not store partition 
boundaries. \textsc{ProbeExt} receives $m$ and a real number $0 \le \alpha < 1$ as parameters,
and initially tries to set up maximal partitions of size at most $B = m (1+ \alpha)$. We discuss the 
actual purpose of $\alpha$ later, however, we mention that the choice of $\alpha$ does not 
affect the approximation factor of the algorithm. 
Let $B^*$ denote the optimal bottleneck value. If $B^* > m (1+ \alpha)$ then
\textsc{ProbeExt} will reach a state where all $p$ partitions are set up,
while there are still integers in the input stream. 

In this situation, we \textit{merge}
all adjacent partitions $i$ and $i+1$ for odd $i$. In so doing, we create $\lf p/2 \rf + 1$
new partitions, each with weight at most $2m(1+\alpha)$. We double the current bottleneck value $B$ from $m(1+\alpha)$ to $2m(1+\alpha)$ 
\end{minipage}
\hfill
\begin{minipage}[c]{6.3cm}
\vspace{-1.5cm}
\begin{algorithm}[H]
\caption{\textsc{ProbeExt}($m, p, \alpha$)}
\begin{algorithmic}
 \STATE $I \gets 1$ \COMMENT{current element index}
 \STATE $P \gets 1$ \COMMENT{current separator index}
 \STATE $W \gets 0$ \COMMENT{current parition weight}
 \STATE $B \gets m (1 + \alpha)$ \COMMENT{curr.\ bottleneck value}
 \WHILE{input stream not empty}
  \STATE $x \gets $ next integer from stream
  \STATE $I \gets I+1$
  \STATE \textbf{if} $W+x \leq B$ \textbf{then} $W \gets W+x$ 
  \STATE \textbf{else if} $P < p$ \textbf{then} 
  \STATE $\quad$ $s_P \gets I$, \ $P \gets P+1$, \ $W \gets x$
  \STATE \textbf{else} \COMMENT{merge adjacent partitions}
  \STATE $\quad$ $s_P \gets I$, \ $B \gets 2B$, \ $P \gets \lf \frac{p}{2} \rf + 1$
  \STATE $\quad$ \textbf{for} $i = 1 \dots \frac{p}{2}$ \textbf{do} $s_i \gets s_{2i}$ \textbf{end for}
  \STATE $\quad$ \textbf{if} $p$ even \textbf{then} $W \gets x$ 
  \STATE $\quad$ \textbf{else} $W \gets W+x$ \textbf{end if}
  \STATE \textbf{end if}
\ENDWHILE
\RETURN $B$, $(1, s_1, \dots, s_{p-1}, I)$
\end{algorithmic}
\end{algorithm}
\end{minipage}

\vspace{0.1cm}
\noindent 
and we continue setting up partitions. We perform these steps repeatedly until we reach the end of the stream, and 
we obtain a bottleneck value of $2^i (1+\alpha) m$,  where $i$ denotes the number of merge operations that
occurred during the execution of the algorithm.
We summarize the space requirements of \textsc{ProbeExt} and \textsc{ProbeExtB} in the following lemma.

\begin{lemma} \label{lem:space-probe-ext}
 A run of \textsc{ProbeExt} requires space $\Order(\log (mn^p))$, and a run of \textsc{ProbeExtB} 
requires space $\Order(\log (mn))$.
\end{lemma}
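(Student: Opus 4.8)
The plan is to account for every variable that \textsc{ProbeExt} maintains during its single pass and bound the number of bits each one requires, then observe that \textsc{ProbeExtB} is the same algorithm with the separator array stripped out. First I would enumerate the state of \textsc{ProbeExt}: the element index $I$, the separator index $P$, the current partition weight $W$, the current bottleneck value $B$, the read integer $x$, and the array of separators $s_1,\dots,s_{p-1}$. The index $I$ never exceeds $n+1$, so it costs $\Order(\log n)$ bits; $P$ is always in $\{1,\dots,p\}$, costing $\Order(\log p)$ bits; $x$ is an input integer, costing $\Order(\log m)$ bits.

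The only slightly non-trivial quantities are $W$ and $B$. Here I would invoke the structure of the algorithm together with Lemma~\ref{lem:opt-bottleneck-bound}. Each merge doubles $B$, starting from $m(1+\alpha) < 2m$; since after $i$ merges every partition has weight at most $2^i m(1+\alpha)$ and the algorithm terminates, the final value $2^i m(1+\alpha)$ is a valid bottleneck value for a partition of $X$ into at most $p$ blocks, hence at most $S + (p-1)m \le nm + (p-1)m = \Order(mn)$ by the upper bound of Lemma~\ref{lem:opt-bottleneck-bound} (more crudely, $B$ only ever increases when the stream is not yet exhausted, and a single partition of weight $B$ fits inside the total weight $S \le nm$, so $B = \Order(mn)$). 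Therefore $B$, and consequently $W \le B$, are each representable in $\Order(\log(mn))$ bits. The separator array contributes $p-1$ indices of $\Order(\log n)$ bits each, i.e.\ $\Order(p\log n)$ bits in total.

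Summing the contributions for \textsc{ProbeExt} gives $\Order(p\log n + \log p + \log m + \log(mn)) = \Order(p\log n + \log m) = \Order(\log(mn^p))$, matching the claim. For \textsc{ProbeExtB} I would note that it performs exactly the same computation except that it does not record the separators $s_1,\dots,s_{p-1}$ — the merge step (which in \textsc{ProbeExt} rewrites the array via $s_i \gets s_{2i}$) simply becomes a no-op on the array side, and only the scalars $I, P, W, B, x$ survive. That removes the $\Order(p\log n)$ term and leaves $\Order(\log p + \log B + \log m) = \Order(\log(mn))$, again as claimed.

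I do not anticipate a genuine obstacle here; the one point that needs a sentence of care rather than a throwaway remark is the bound on $B$, since a priori $B$ doubles and one must argue it cannot overshoot $\Order(mn)$ — this follows because $B$ is increased only while the stream still has unread elements, so at the moment of the last doubling the accumulated-so-far weight already forced $B$ past the old value, and the final $B$ bounds a block weight in a valid $p$-partition, which is $\Order(mn)$ by Lemma~\ref{lem:opt-bottleneck-bound}.
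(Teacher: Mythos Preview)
Your overall structure---enumerate the state variables, bound each, sum---is exactly what the paper does, and the separator array versus no-array distinction between \textsc{ProbeExt} and \textsc{ProbeExtB} is handled the same way. So the approach is correct and matches the paper.

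There is one genuine slip, though, in your bound on $B$. You write that the final $B = 2^i m(1+\alpha)$ is ``a valid bottleneck value \dots\ hence at most $S+(p-1)m$ by Lemma~\ref{lem:opt-bottleneck-bound}''. That inference is backwards: Lemma~\ref{lem:opt-bottleneck-bound} upper-bounds the \emph{optimal} bottleneck $B^*$, and any value for which \textsc{Probe} succeeds is only known to satisfy $B \ge B^*$, not $B \le (S+(p-1)m)/p$. Your parenthetical ``crude'' argument has the same problem: no partition is guaranteed to have weight close to $B$, so ``a single partition of weight $B$ fits inside $S$'' does not follow. The paper sidesteps this by forward-referencing the constant-factor approximation guarantee (proved afterwards), which gives $B = \Order(B^*) = \Order(mn/p)$ directly. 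If you prefer a self-contained argument, the following one-liner works: at the moment of the last merge the current partition weight $W$ satisfies $W + x > B_{\mathrm{old}}$ with $x \le m$ and $W \le S \le nm$, so $B_{\mathrm{old}} < nm + m$ and hence $B = 2B_{\mathrm{old}} = \Order(nm)$. With that fix your proof is complete and essentially identical to the paper's.
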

\begin{proof}
 \textsc{ProbeExt} stores the separators, which accounts for $\Order(p \log n)$ space. Furthermore,
it stores the variable $B$ which is bounded by the bottleneck value of the partitioning it creates. As we show later that the algorithm is a constant factor approximation, this value is in the order of the optimal
bottleneck value, which in turn is bounded by $\Order( \frac{mn}{p})$, see Lemma~\ref{lem:opt-bottleneck-bound}.
Therefore, we obtain the bound $\Order(p \log n + \log (\frac{mn}{p}) ) = \Order(\log (mn^p))$.

\textsc{ProbeExtB} does not store the separators. Therefore, its space requirement is bounded by the 
optimal bottleneck value $\Order(\log (\frac{mn}{p})) = \Order( \log (mn) )$.  \qed
\end{proof}

In the remainder of this section, we show that if the optimal bottleneck value $B^*$ is large compared to $m$, then the algorithm 
is close to a $2$-approximation (see Lemma~\ref{thm:probeext}). 
We use this fact
in Section~\ref{sec:known-m} to obtain a $(1+\epsilon)$-approximation algorithm.

\begin{lemma} \label{lem:stream-weight}
 Suppose that \textsc{ProbeExt} (or \textsc{ProbeExtB}) performs $i$ merge operations. Then the weight of the input 
stream is at least 
\begin{eqnarray*}
\frac{pm}{2} \left( 2^i (1 +\alpha) - \alpha - i \right) - \frac{m}{2} (i+\alpha).
\end{eqnarray*}
\end{lemma}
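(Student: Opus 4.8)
The plan is to lower-bound, for each $j\in\{1,\dots,i\}$, the quantity $T_j$ defined as the total weight of stream elements read strictly before the $j$-th merge is triggered. If $I_j$ denotes the index of the element that triggers merge~$j$, then $T_j=\sum_{\ell<I_j}X_\ell$, so $S\ge T_i$ and it suffices to bound $T_i$ from below. Write $B_j=2^j(1+\alpha)m$ for the value of the variable $B$ while the algorithm runs between merges $j$ and $j+1$ (so $B_0=(1+\alpha)m$ is the initial value), and call that stretch ``epoch~$j$''. First I would record two elementary facts: (a) if a partition is sealed via the ``$P<p$'' branch during epoch~$j$, then its weight exceeds $B_j-m$, since sealing happens exactly when its current weight plus the incoming element exceeds $B_j$ while that element is at most $m$; and (b) for the same reason, when merge~$j$ is triggered the open partition~$p$ has weight $>B_{j-1}-m$. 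A variant handles the first epoch: right before merge~$1$, for every $k\in\{1,\dots,p-1\}$ the partition weights satisfy $w_k+w_{k+1}>B_0$, because partition~$k$ was sealed when the first element of partition~$k+1$ did not fit and partition~$k+1$ contains at least that element. (The same analysis applies verbatim to \textsc{ProbeExtB}, which does the identical bookkeeping without storing separators.)

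From the last fact, pairing the $p$ partitions present just before merge~$1$ as $(1,2),(3,4),\dots$ yields $\lfloor p/2\rfloor$ disjoint pairs, each of total weight more than $B_0$, hence $T_1>\lfloor p/2\rfloor B_0\ge\tfrac{p-1}{2}B_0$. The heart of the argument is the recursion
\[
T_j \;>\; T_{j-1}+\tfrac{p}{2}\,(B_{j-1}-m)-\tfrac{m}{2}\qquad(2\le j\le i).
\]
To prove it I would describe the array of $p$ partitions at the instant merge~$j$ fires: its first $\lfloor p/2\rfloor$ entries are exactly the pairwise merges produced by merge~$j-1$, which have been sealed and untouched since, while its last $\lceil p/2\rceil$ entries---the partition $\lfloor p/2\rfloor+1$ that merge~$j-1$ left open (the ``remainder''), after it has been topped up, followed by the partitions freshly opened during epoch~$j-1$---each carry weight $>B_{j-1}-m$ by facts (a) and (b). For $p$ even the first $p/2$ partitions together carry exactly $T_{j-1}$, the total weight read before merge~$j-1$, giving $T_j>T_{j-1}+\tfrac{p}{2}(B_{j-1}-m)$. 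For $p$ odd they carry $T_{j-1}-w$, where $w\le B_{j-2}=B_{j-1}/2$ is the weight of the still-open partition~$p$ at merge~$j-1$ (its content, together with the triggering element, is what became the remainder partition), so $T_j>T_{j-1}-\tfrac12 B_{j-1}+\tfrac{p+1}{2}(B_{j-1}-m)=T_{j-1}+\tfrac{p}{2}(B_{j-1}-m)-\tfrac{m}{2}$; both parities yield the displayed recursion.

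Finally I would unroll: $T_i>T_1+\sum_{j=2}^{i}\big(\tfrac{p}{2}(B_{j-1}-m)-\tfrac{m}{2}\big)$, substitute $T_1>\tfrac{p-1}{2}B_0$ and the geometric identity $\sum_{j=1}^{i-1}B_j=(1+\alpha)m(2^i-2)$, and collect terms; the exponential part becomes $\tfrac{pm}{2}2^i(1+\alpha)$ and everything else collapses to $-\tfrac{(p+1)m}{2}(\alpha+i)$, which is precisely $\tfrac{pm}{2}\big(2^i(1+\alpha)-\alpha-i\big)-\tfrac{m}{2}(i+\alpha)$ after expansion. Together with $S\ge T_i$ this proves the lemma (for $i\ge1$; the base case of the induction is $i=1$). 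The step I expect to be the main obstacle is the recursion: one must carefully identify which of the $p$ partitions present when merge~$j$ fires are ``carried over and already sealed'' versus ``freshly filled to within $m$ of the current capacity'', and in particular verify that the remainder partition left by the previous merge is itself topped up to weight $>B_{j-1}-m$ before it is sealed (or, if it happens to be the last partition, that it triggers merge~$j$ at that weight). This is exactly where the parity of $p$ and the additive-$m$ slack enter the estimate.
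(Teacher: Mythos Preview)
Your proof is correct and follows essentially the same route as the paper: you set up the same base case $T_1\ge\tfrac{p-1}{2}(1+\alpha)m$, establish the same recursion $T_j>T_{j-1}+\tfrac{p}{2}(B_{j-1}-m)-\tfrac{m}{2}$ via the same even/odd case split (bounding the leftover partition-$p$ weight by $B_{j-2}$ in the odd case), and unroll to the same closed form. Your exposition is slightly more explicit about which element triggers a merge and why the ``remainder'' partition also satisfies the $>B_{j-1}-m$ bound, but the argument is the paper's argument.
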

\begin{proof}
 We develop a lower bound on the total weight of the stream after $i$ merge operations have been
 executed, and we denote this lower bound by $LB(i)$.
 
 Consider the situation just before the first merge operation. Denote by $w_j$ the weight of the $j$th partition. Note that
 for all $j$ we have $w_j + w_{j+1} > m(1+\alpha) $, otherwise the algorithm would have created a single partition instead of the two adjacent partitions $j$ and $j+1$. Thus, if $p$ is even, we have $\sum_j w_j \ge \frac{1}{2} p m(1+\alpha)$ and if $p$ is odd, we have 
$\sum_j w_j \ge \frac{1}{2} (p-1) m(1+\alpha)$. To unify the analysis for the even and the odd case, we set 
\begin{eqnarray*}
LB(1) = \frac{1}{2} (p-1) m(1+\alpha). 
\end{eqnarray*}
Consider now the situation just before the $i$th merge operation, again denoting by $w_j$ the weight of the $j$th partition. 
If $p$ is even, then the weight of the first $p/2$ partitions is at least $LB(i-1)$. 
Clearly, each of the remaining $p/2$ partitions have weight of at least $\lc 2^{i-1} m(1+\alpha) - m \rc$, hence:
\begin{eqnarray*}
 \sum_j w_j = \sum_{j=1}^{p/2} w_j + \sum_{j=p/2+1}^{p} w_j \ge LB(i-1) + \frac{p}{2} ( 2^{i-1} m(1+\alpha) - m ).
\end{eqnarray*}
Suppose now that $p$ is odd. Then the weight of the first $(p-1)/2$ partitions is at least $LB(i-1) - \lf 2^{i-2} m (1 + \alpha) \rf$. 
The remaining $(p+1)/2$ partitions have a weight of at least $\lc 2^{i-1} m(1+\alpha) - m \rc$, and we obtain
\begin{eqnarray*}
 \sum_j w_j = \sum_{j=1}^{(p-1)/2} w_j + \sum_{j=(p+1)/2}^{p} w_j 
 & \ge & LB(i-1) - \lf 2^{i-2} m (1 + \alpha) \rf + \frac{p+1}{2} \lc 2^{i-1} m(1+\alpha) - m \rc \\
 & \ge & LB(i-1) + \frac{p}{2} ( 2^{i-1} m (1+\alpha) - m) - \frac{1}{2} m.
\end{eqnarray*}
In order to treat the even and the odd case at the same time, we set
\begin{eqnarray*}
 LB(i) = LB(i-1) + \frac{p}{2} ( 2^{i-1} m (1+\alpha) - m) - \frac{1}{2} m,
\end{eqnarray*}
and we eliminate the recursion:
\begin{eqnarray*}
 LB(i) & = &  \sum_{j=2}^i {\bigl(LB(j) - LB(j-1) \bigr)} + LB(1) 
 = \frac{pm}{2} \left( 2^i (1 +\alpha) - \alpha - i \right) - \frac{m}{2} (i+\alpha).
\end{eqnarray*} \qed
 
\end{proof}

\begin{lemma}\label{lem:merge-approx}
 Suppose that \textsc{ProbeExt} (or \textsc{ProbeExtB}) performs $i$ merge operations, for $i \ge 2$. Then for any $0 \le \alpha < 1$, 
\textsc{ProbeExt} (resp. \textsc{ProbeExtB}) has an approximation factor of at most
\begin{eqnarray*}
 2 + \frac{2 (\alpha + i)}{ 2^{i-1} (1+\alpha) - i - \alpha } .
\end{eqnarray*}
\end{lemma}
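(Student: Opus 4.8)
The plan is to bound the approximation factor $B/B^*$ directly, where $B$ is the bottleneck value returned by the algorithm after $i$ merge operations and $B^*$ is the optimum. First I would pin down what $B$ is: the algorithm starts with $B = m(1+\alpha)$ and doubles $B$ at every merge, so after $i$ merges $B = 2^i(1+\alpha)m$. I would also record the invariant that every completed partition currently has weight at most $B$: it holds before the first merge by construction, and a merge replaces two partitions of weight at most $B$ by one of weight at most $2B$ while simultaneously doubling $B$, so it is preserved. Consequently the returned partitioning has all blocks of weight at most $B = 2^i(1+\alpha)m$ (in particular $B^* \le B$, so outputting $B$ is admissible for \textsc{PartB}), and the approximation factor of both \textsc{ProbeExt} and \textsc{ProbeExtB} is at most $B/B^*$.

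The next step is a lower bound on $B^*$. By Lemma~\ref{lem:opt-bottleneck-bound} we have $B^* \ge \lc S/p \rc \ge S/p$, and by Lemma~\ref{lem:stream-weight} the total stream weight satisfies $S \ge \frac{pm}{2}\bigl(2^i(1+\alpha) - \alpha - i\bigr) - \frac{m}{2}(i+\alpha)$. Dividing by $p$ and bounding $\tfrac{1}{2p} \le \tfrac12$ yields the clean estimate
\begin{eqnarray*}
B^* & \ge & \frac{m}{2}\bigl(2^i(1+\alpha) - \alpha - i\bigr) - \frac{m}{2}(i+\alpha) \ = \ m\bigl(2^{i-1}(1+\alpha) - i - \alpha\bigr).
\end{eqnarray*}
Write $D := 2^{i-1}(1+\alpha) - i - \alpha = (2^{i-1}-i) + \alpha(2^{i-1}-1)$; for $i \ge 2$ and $0 \le \alpha < 1$ this is $\ge 0$, and it vanishes only when $i = 2$ and $\alpha = 0$, in which case the asserted bound is $+\infty$ and there is nothing to prove. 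When $D > 0$, combining the two bounds gives $B/B^* \le 2^i(1+\alpha)/D$.

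Finally I would close with the elementary identity $2^i(1+\alpha) = 2D + 2(i+\alpha)$, which rewrites $2^i(1+\alpha)/D$ as $2 + \frac{2(\alpha+i)}{D}$, exactly the claimed bound. I do not anticipate a genuine obstacle here; the only points requiring a little care are recognizing Lemma~\ref{lem:stream-weight} as precisely the ingredient needed, the harmless slack from replacing $\tfrac{1}{2p}$ by $\tfrac12$ (so the estimate comes out in the stated form rather than a sharper but messier one), and separating out the degenerate case $D = 0$ where the statement is vacuous.
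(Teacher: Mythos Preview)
Your proposal is correct and follows essentially the same route as the paper: both combine $B = 2^i(1+\alpha)m$ with $B^* \ge S/p$ and the lower bound on $S$ from Lemma~\ref{lem:stream-weight}, then absorb the $-\frac{m}{2p}(i+\alpha)$ term via $\frac{1}{2p}\le\frac12$ to obtain the stated expression. Your treatment is simply more explicit about the partition-weight invariant, the algebraic identity $2^i(1+\alpha)=2D+2(i+\alpha)$, and the degenerate case $D=0$, none of which the paper spells out.
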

\begin{proof}
 Let us denote the bottleneck value of the solution returned by \textsc{ProbeExt} (resp. \textsc{ProbeExtB}) by $B = 2^{i} m (1+\alpha)$, and let $B^*$ denote the optimal bottleneck value. By Lemma~\ref{lem:stream-weight}, the total weight $S$ of the stream is at least 
$\frac{pm}{2} \left( 2^i (1 +\alpha) - \alpha - i \right) - \frac{m}{2} (i+\alpha) $, and $B^*$ is at least a $p$-fraction of S (Lemma~\ref{lem:opt-bottleneck-bound}).
The approximation factor of \textsc{ProbeExt}
(resp. \textsc{ProbeExtB}) can be 
bounded as follows:
\begin{eqnarray*}
\frac{B}{B^*} & \le & \frac{2^i m (1+\alpha)}{\frac{1}{p} \cdot \left( \frac{pm}{2} \left( 2^i (1 +\alpha) - \alpha - i \right) - \frac{m}{2} (i+\alpha) \right)} \le 2 + \frac{2 (\alpha + i)}{ 2^{i-1} (1+\alpha) - i - \alpha }\ .  
\end{eqnarray*}
\qed
\end{proof}
We conclude with the following result:

\ignore{
\lk{****first variant}
\begin{lemma} \label{thm:probeext}
 For any $0 \le \alpha < 1$ \textsc{ProbeExt} (or \textsc{ProbeExtB}) 
is a $(2+\epsilon)$-approximation algorithm if the optimal bottleneck 
value satisfies $B^* > m / \epsilon^{1+\gamma}$, for an arbitrary positive constant $\gamma$, assuming $0 < \epsilon \le \epsilon'$, for some $\epsilon'$ depending only on $\gamma$. \lk{*}
\end{lemma}
\begin{proof}
 By Lemma~\ref{lem:merge-approx}, \textsc{ProbeExt} is a $\left( 2 + \frac{2 (\alpha + i)}{ 2^{i-1} (1+\alpha) - i - \alpha } \right)$- approximation algorithm if $i$ merge operations have been executed. We have:
\begin{eqnarray*}
 \frac{2(\alpha + i)}{ 2^{i-1} (1+\alpha) - i - \alpha }  \le \frac{2(i + 1)}{ 2^{i-1} - (i + 1)} \le \frac{3(i+1)}{2^{i-1}}.
\end{eqnarray*}
The first inequality uses the bounds on $\alpha$. To make sure that all quantities are positive and the second inequality also holds, we require $i \ge 6$.

Let $\gamma > 0$ be an arbitrary small constant. We set: $i = (1+\gamma) \log (\frac{1}{\epsilon})$. Enforcing $\epsilon \le 1/64$ ensures that the previous bound on $i$ holds. We further have:
\begin{eqnarray*}
 \frac{3(i+1)}{2^{i-1}} = \epsilon^{1+\gamma} \cdot 3 \left( (1+\gamma) \log (\frac{1}{\epsilon}) + 1 \right) \le \epsilon.
\end{eqnarray*}

The last inequality holds assuming that $\epsilon$ is sufficiently small (as a function of $\gamma$ only), since $\lim_{\epsilon \rightarrow 0} \left(\epsilon^{\gamma} \log (\frac{1}{\epsilon}) \right) = 0$.



Under these conditions, \textsc{ProbeExt} is a $(2+\epsilon)$-approximation algorithm. 
Since $B \ge B^*$, \textsc{ProbeExt} must perform $i$ merge operations if 
\begin{eqnarray*}
B^* & > & m(1+\alpha) 2^{i-1}.
\end{eqnarray*}
Since 
\begin{eqnarray*}
m(1+\alpha) 2^{i-1} < m 2^{i} = \frac{m}{\epsilon^{1+\gamma}},
\end{eqnarray*}
the condition $B^* > m / \epsilon^{1+\gamma}$ is a sufficient one. \lk{*}
 \qed
\end{proof}
}
\begin{lemma} \label{thm:probeext}
 For any $0 \le \alpha < 1$ \textsc{ProbeExt} (or \textsc{ProbeExtB}) 
is a $(2+\epsilon)$-approximation algorithm if the optimal bottleneck 
value satisfies $B^* > m / \epsilon^2$, assuming $0 < \epsilon \le 1/64$.
\end{lemma}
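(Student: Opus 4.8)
The plan is to derive the claim directly from the approximation bound of Lemma~\ref{lem:merge-approx}, which already expresses the approximation factor of \textsc{ProbeExt} (and \textsc{ProbeExtB}) as $2 + g(i)$, where
\[
  g(i) \ :=\ \frac{2(\alpha+i)}{2^{i-1}(1+\alpha) - i - \alpha}
\]
and $i\ge 2$ is the number of merge operations actually performed. So it suffices to (a) obtain a good lower bound on $i$ from the hypothesis $B^* > m/\epsilon^2$, and (b) check that this forces $g(i)\le\epsilon$. For (a): every run of \textsc{ProbeExt} terminates with a partitioning of the stream into $p$ contiguous blocks, each of weight at most the returned value $B = 2^i m(1+\alpha)$; hence $B^* \le 2^i m(1+\alpha)$, i.e.\ $i \ge \log\!\bigl(B^*/(m(1+\alpha))\bigr)$ (all logarithms to base $2$). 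Substituting $B^* > m/\epsilon^2$ yields
\[
  i \ \ge\ i_0 \ :=\ 2\log(1/\epsilon) \;-\; \log(1+\alpha).
\]
Since $0 \le \alpha < 1$ and $\epsilon \le 1/64$ we have $i_0 \ge 2\log(1/\epsilon) - 1 \ge 11$; in particular $i\ge 2$, so Lemma~\ref{lem:merge-approx} applies.

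Next I would record that $g$ is positive and strictly decreasing on $i \ge 3$: the numerator grows linearly while the denominator is already positive there and is eventually dominated by $2^{i-1}(1+\alpha)$, so a one-line check of $g(i+1)/g(i) < 1$ finishes it. Combined with $i \ge i_0 \ge 11$, this gives an approximation factor of at most $2 + g(i) \le 2 + g(i_0)$, and it remains to prove $g(i_0) \le \epsilon$.

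The decisive observation is that the parameter $\alpha$ cancels in $g(i_0)$. By the choice of $i_0$,
\[
  2^{\,i_0-1}(1+\alpha) \ =\ 2^{\,2\log(1/\epsilon)-1-\log(1+\alpha)}(1+\alpha) \ =\ 2^{\,2\log(1/\epsilon)-1} \ =\ \frac{1}{2\epsilon^2}.
\]
Bounding the numerator of $g(i_0)$ via $\alpha < 1$ and $i_0 \le 2\log(1/\epsilon)$, and the denominator via this identity together with $i_0 + \alpha < 2\log(1/\epsilon)+1$, gives
\[
  g(i_0) \ \le\ \frac{4\log(1/\epsilon) + 2}{\ \frac{1}{2\epsilon^2} - 2\log(1/\epsilon) - 1\ }.
\]
Clearing denominators, $g(i_0) \le \epsilon$ is equivalent to $4\log(1/\epsilon) + 2 + \epsilon\bigl(2\log(1/\epsilon) + 1\bigr) \le \frac{1}{2\epsilon}$. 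At $\epsilon = 1/64$ the left side equals $26 + \frac{13}{64} < 32 = \frac{1}{2\epsilon}$; as $\epsilon$ decreases, the left side decreases (its derivative in $\epsilon$ is negative, dominated by that of $4\log(1/\epsilon)$) while $\frac{1}{2\epsilon}$ increases, so the inequality holds for all $\epsilon \in (0, \tfrac{1}{64}]$. Hence $g(i_0) \le \epsilon$, and the approximation factor is at most $2+\epsilon$. Since \textsc{ProbeExtB} performs exactly the same merges and returns the same bottleneck value, the bound holds for it as well.

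I expect the only real subtlety to be carrying the dependence on $\alpha$ exactly. Replacing $1+\alpha$ by $2$ in the estimate $i \ge \log\!\bigl(B^*/(m(1+\alpha))\bigr)$ costs an additive $1$ in the exponent; because the denominator $2^{i-1}(1+\alpha)$ of $g$ simultaneously shrinks as $\alpha\to 0$, this loss is not affordable when $\epsilon$ is as large as $1/64$. Keeping $i_0 = 2\log(1/\epsilon) - \log(1+\alpha)$ makes $2^{i_0-1}(1+\alpha)$ independent of $\alpha$, which renders the whole estimate uniform. A minor point to get right is the monotonicity of $g$, which is precisely what licenses replacing the (possibly larger) true number of merges by the lower bound $i_0$.
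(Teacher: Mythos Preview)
Your overall strategy matches the paper's: invoke Lemma~\ref{lem:merge-approx}, lower-bound the number of merges $i$ from $B \ge B^*$ together with $B^* > m/\epsilon^2$, and verify that the resulting error term is at most $\epsilon$. The $\alpha$-cancellation you exploit, namely $2^{i_0-1}(1+\alpha) = 1/(2\epsilon^2)$, is a nice observation and is sharper than what the paper does.

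However, your final monotonicity claim is backwards. You assert that as $\epsilon$ decreases, the left side $4\log(1/\epsilon) + 2 + \epsilon\bigl(2\log(1/\epsilon)+1\bigr)$ decreases, citing that ``its derivative in $\epsilon$ is negative''. A negative derivative in $\epsilon$ means the quantity decreases as $\epsilon$ \emph{increases}, hence it \emph{increases} as $\epsilon$ decreases; indeed $4\log(1/\epsilon) \to \infty$ as $\epsilon \to 0$. The inequality you want is still true, because $1/(2\epsilon)$ grows much faster: a clean fix is to multiply through by $2\epsilon$, obtaining
\[
 8\epsilon\log(1/\epsilon) + 4\epsilon + 4\epsilon^2\log(1/\epsilon) + 2\epsilon^2 \ \le\ 1,
\]
whose left side is now genuinely increasing in $\epsilon$ on $(0,1/64]$ and is below $0.82$ at $\epsilon = 1/64$.

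For comparison, the paper sidesteps this endgame entirely. It first discards $\alpha$ via $0 \le \alpha < 1$ to get $g(i) \le \frac{2(i+1)}{2^{i-1}-(i+1)}$, and then bounds this by $2^{-i/2}$ for all $i \ge 12$. Setting $i = 2\log(1/\epsilon)$ gives $g(i) \le \epsilon$ in one line, and the monotonicity in $i$ needed to cover the actual number of merges is immediate from $2^{-i/2}$. This trades your tighter handling of $\alpha$ for a shorter verification.
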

\begin{proof}
 By Lemma~\ref{lem:merge-approx}, \textsc{ProbeExt} is a $\left( 2 + \frac{2 (\alpha + i)}{ 2^{i-1} (1+\alpha) - i - \alpha } \right)$- approximation algorithm if $i$ merge operations have been executed. We have:
\begin{eqnarray*}
 \frac{2(\alpha + i)}{ 2^{i-1} (1+\alpha) - i - \alpha }  \le \frac{2(i + 1)}{ 2^{i-1} - (i + 1)} \le \frac{1}{2^{i/2}} = \epsilon.
\end{eqnarray*}
The first inequality uses the bounds on $\alpha$. To make sure that all quantities are positive and the second inequality also holds, we require $i \ge 12$. The last equality gives $i = 2 \log (\frac{1}{\epsilon})$, and our previous bound on $i$ forces $\epsilon \le 1/64$. 
Under these conditions, \textsc{ProbeExt} is a $(2+\epsilon)$-approximation algorithm. 
Since $B \ge B^*$, \textsc{ProbeExt} must perform $i$ merge operations if 
$B^* > m(1+\alpha) 2^{i-1}$.
Since 
$m(1+\alpha) 2^{i-1} < m 2^{i} = \frac{m}{\epsilon^{2}}$,
the condition $B^* > m / \epsilon^{2}$ is a sufficient one.  \qed
\end{proof}


\section{$(1+\epsilon)$-approximation for Known $m$} \label{sec:known-m}

In this section, we present a $(1+\epsilon)$-approximation algorithm for \textsc{Part} and \textsc{PartB}, 
using as building blocks the \textsc{Probe} and \textsc{ProbeExt} algorithms presented in 
Sections~\ref{sec:probe} and \ref{sec:probe-ext}. We assume that the maximum $m$ of the sequence is known in advance.



\begin{algorithm}
\caption{$(1+\epsilon)$-Approximation For Known $m$ \label{algo:known-m}}
 \begin{algorithmic}
   \STATE $\delta \gets \frac{\epsilon}{1 + \frac{1}{2} \epsilon}$
   \STATE \textbf{do in parallel } \COMMENT{in one pass}
   \STATE $\quad$ \textsc{Probe}$(2^i (1 + \epsilon)^jm, p)$ for all $i \in \left\{0, 1, \dots, \lc \log ( 1/{\delta}^{2}) \rc \right\}$, $j \in \left\{0, 1, \dots, \lc \frac{1}{\log (1 + \epsilon)} \rc \right\}$
   \STATE $\quad$ \textsc{ProbeExt}($m, p, (1+ \frac{1}{2} \epsilon)^j - 1)$, for all $j \in \left\{0, 1, \dots, \lc \frac{1}{\log(1+\frac{1}{2} \epsilon) } \rc \right\}$
   \STATE \textbf{end do}
   \RETURN partitioning with smallest bottleneck value
 \end{algorithmic}
\end{algorithm}

Algorithm~\ref{algo:known-m} runs multiple copies of the the \textsc{Probe} algorithm and multiple copies of the 
\textsc{ProbeExt} algorithm in parallel. We argue that if the optimal bottleneck value $B^*$ is sufficiently large 
then one run of the \textsc{ProbeExt} algorithm will return a $(1+\epsilon)$-approximation. If $B^*$ is small,
then a run of the \textsc{Probe} algorithm will return a $(1+\epsilon)$-approximation. 

\begin{theorem} \label{thm:known-m}
 For any $\epsilon < 1/64$, Algorithm~\ref{algo:known-m} is a $(1+\epsilon)$-approximation streaming algorithm 
for \textsc{Part} using space $\Order \left( \frac{1}{\epsilon} \log(\frac{1}{\epsilon}) \log ({mn^p}) \right)$. 
The analogous algorithm for \textsc{PartB}  is a $(1+\epsilon)$-approximation streaming algorithm using space 
$\Order \left( \frac{1}{\epsilon} \log(\frac{1}{\epsilon}) \log ({mn}) \right)$.
\end{theorem}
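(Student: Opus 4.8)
The plan is to prove the theorem by a case analysis on the size of the optimal bottleneck value $B^*$, matching the two families of subroutines that Algorithm~\ref{algo:known-m} runs in parallel. The threshold is $m/\delta^2$, where $\delta = \epsilon/(1+\tfrac12\epsilon)$ is chosen so that a $(2+\delta')$-type guarantee coming out of \textsc{ProbeExt}, after an extra $(1+\epsilon)^j$ refinement, collapses to $(1+\epsilon)$; I would carry the bookkeeping with $\delta$ throughout and only substitute its value at the end. The correctness argument splits as follows. First, if $B^* \le m/\delta^2 = m\cdot(1/\delta)^2$, then $B^*$ lies in the range $\{m,\dots,2^{\lceil\log(1/\delta^2)\rceil}m\}$, and since the \textsc{Probe} calls test all values $2^i(1+\epsilon)^j m$ over the stated ranges of $i$ and $j$, some tested value $B$ satisfies $B^* \le B \le (1+\epsilon)B^*$ (the $2^i$ grid covers the dyadic scale, the $(1+\epsilon)^j$ grid with $j$ up to $\lceil 1/\log(1+\epsilon)\rceil$ refines each dyadic interval since $(1+\epsilon)^{1/\log(1+\epsilon)} = 2$); by Lemma~\ref{lem:probe} that \textsc{Probe} call succeeds and returns a partitioning of bottleneck value $B \le (1+\epsilon)B^*$. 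Second, if $B^* > m/\delta^2$, then by Lemma~\ref{thm:probeext} (applicable since $\epsilon<1/64$ forces $\delta<1/64$ as well) each \textsc{ProbeExt}$(m,p,(1+\tfrac12\epsilon)^j-1)$ run is a $(2+\delta)$-approximation; I would then observe that $2+\delta \le 2(1+\tfrac12\epsilon)$, so one of the $\Theta(1/\log(1+\tfrac12\epsilon)) = \Theta(1/\epsilon)$ values $\alpha_j = (1+\tfrac12\epsilon)^j-1$ yields a returned bottleneck value within a factor $(1+\tfrac12\epsilon)$ of the "right" dyadic multiple of $m$ straddling $B^*$, hence within $(1+\tfrac12\epsilon)^2 \le 1+\epsilon$ of $B^*$ — this is exactly the role of running \textsc{ProbeExt} for a geometric family of $\alpha$'s. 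Since the algorithm returns the partitioning of smallest bottleneck value among all successful runs, in either case the output is a $(1+\epsilon)$-approximation.

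For the space bound I would count the parallel copies and multiply by the per-copy cost. The \textsc{Probe} block uses $(\lceil\log(1/\delta^2)\rceil+1)\cdot(\lceil 1/\log(1+\epsilon)\rceil+1) = \Theta(\log(1/\epsilon))\cdot\Theta(1/\epsilon) = \Theta(\tfrac1\epsilon\log\tfrac1\epsilon)$ copies; the \textsc{ProbeExt} block uses $\lceil 1/\log(1+\tfrac12\epsilon)\rceil+1 = \Theta(1/\epsilon)$ copies. Each \textsc{Probe} copy costs $\Order(\log(mn^p))$ (Lemma~\ref{lem:probe}, noting the tested bottleneck values are $\Order(mn/p)$ by Lemma~\ref{lem:opt-bottleneck-bound}, so $\log B = \Order(\log(mn))$), and each \textsc{ProbeExt} copy costs $\Order(\log(mn^p))$ (Lemma~\ref{lem:space-probe-ext}). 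Multiplying, the total is $\Order(\tfrac1\epsilon\log\tfrac1\epsilon\log(mn^p))$ for \textsc{Part}. For \textsc{PartB} the identical count applies but \textsc{ProbeB}/\textsc{ProbeExtB} each cost $\Order(\log(mn))$ (Lemmas~\ref{lem:probe} and \ref{lem:space-probe-ext}, since they do not store the $\Order(p\log n)$ separators), giving $\Order(\tfrac1\epsilon\log\tfrac1\epsilon\log(mn))$.

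The main obstacle I anticipate is the second case: correctly combining the $(2+\delta)$ guarantee of \textsc{ProbeExt} with the $(1+\tfrac12\epsilon)$-geometric refinement over $\alpha$, and verifying that the returned value $2^i(1+\alpha_j)m$ can be made to land within $(1+\epsilon)$ of $B^*$. The subtlety is that \textsc{ProbeExt}'s output is forced to the form $2^i(1+\alpha)m$ for the $i$ determined by the stream and the chosen $\alpha$, so one must check that as $\alpha$ ranges over $\{(1+\tfrac12\epsilon)^j-1\}$ some choice places $2^{i-1}(1+\alpha)m < B^* \le 2^i(1+\alpha)m$ with $2^i(1+\alpha)m$ only a $(1+\tfrac12\epsilon)$ factor above the least such dyadic-times-$m$ value above $B^*$ — essentially, the geometric $\alpha$-grid fills in the gaps between consecutive powers of two at resolution $(1+\tfrac12\epsilon)$. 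Once that is pinned down, combined with $2+\delta \le 2(1+\tfrac12\epsilon)$ and the fact that this dyadic value is at most $2B^*$ only when $\alpha=0$ but at most $(1+\tfrac12\epsilon)B^*$ for the right $\alpha$, the two $(1+\tfrac12\epsilon)$ factors multiply to at most $1+\epsilon$ by the choice $\delta = \epsilon/(1+\tfrac12\epsilon)$. Everything else — the dyadic/geometric covering in case one, and the space accounting — is routine.
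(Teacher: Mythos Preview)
Your overall structure --- splitting on $B^* \lessgtr m/\delta^2$ and matching the two subroutine families --- is exactly what the paper does, and your Case~1 argument and space accounting are essentially correct.

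The gap is in Case~2. Your plan is to find a single ``right'' $\alpha_{j_0}$ for which \textsc{ProbeExt} returns a value within $(1+\tfrac12\epsilon)$ of $B^*$, then absorb a second $(1+\tfrac12\epsilon)$ factor. This does not work as stated, for two reasons. First, the inequality $(1+\tfrac12\epsilon)^2 \le 1+\epsilon$ that you invoke is simply false for $\epsilon>0$. Second, and more fundamentally, the $(2+\delta)$-approximation guarantee from Lemma~\ref{thm:probeext} does \emph{not} force \textsc{ProbeExt} with parameter $\alpha_{j_0}$ to return the smallest grid value $V=2^{i_0}(1+\tfrac12\epsilon)^{j_0}m$ above $B^*$: it might return $2V$, since $2V \le 2(1+\tfrac12\epsilon)B^*$ need not exceed $(2+\delta)B^*$. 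So ``the geometric $\alpha$-grid fills in the gaps'' is true of the grid of \emph{possible} outputs, but not of the \emph{actual} outputs, and your covering argument collapses.

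The paper's fix is not to look for a good single run but to reason about the minimum $B=2^{i'}(1+\tfrac12\epsilon)^{j'}m$ over \emph{all} \textsc{ProbeExt} runs, and then compare it to the adjacent run $j=j'-1$. Minimality of $B$ forces that adjacent run to return at least $2^{i'+1}(1+\tfrac12\epsilon)^{j'-1}m$; the $(2+\delta)$ guarantee forces it to return at most that (anything larger would exceed $(2+\delta)B^*$). Hence the adjacent run returns exactly $\tfrac{2}{1+\epsilon/2}B$, and applying the $(2+\delta)$ bound to \emph{that} run gives $B \le \tfrac{(2+\delta)(1+\epsilon/2)}{2}B^*$, which equals $(1+\epsilon)B^*$ \emph{exactly} by the choice $\delta=\epsilon/(1+\tfrac12\epsilon)$ --- no squaring of $(1+\tfrac12\epsilon)$ is involved. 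This indirect adjacency argument is the missing idea in your Case~2.
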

\begin{proof}
We distinguish two cases depending on the magnitude of the optimal bottleneck value $B^*$. 
In the following, $\delta = \frac{\epsilon}{1+\frac{1}{2} \epsilon}$ as in Algorithm~\ref{algo:known-m}.
\begin{enumerate}
 \item $B^* \le {m}/{\delta}^{2}: $ We show that one of the runs of \textsc{Probe} is successful and 
returns a partitioning with bottleneck value $B$ such that $B \le B^* (1 + \epsilon)$. We run \textsc{Probe}
with bottleneck values $2^i (1 + \epsilon)^j m$, and since there is a run with $i = \lc \log ( 1/{\delta}^{2}) \rc$, 
there is at least one successful run of \textsc{Probe} with a bottleneck value of at most ${m}/{\delta}^{2}$. 
Let $B = 2^{i'} (1 + \epsilon)^{j'} m$ denote the smallest bottleneck value of a successful run for values $i', j'$.
Suppose that $j' > 0$. Then the run with bottleneck value $2^{i'} (1 + \epsilon)^{j'-1} m$ failed, and
therefore 
\begin{eqnarray*}
 B = 2^{i'} (1 + \epsilon)^{j'} m \ge B^* > 2^{i'} (1 + \epsilon)^{j'-1} m,
\end{eqnarray*}
which implies $B \le (1 + \epsilon) B^*$. Suppose now that $j' = 0$ and $i' > 0$. Then $B = 2^{i'} m$, and the run with bottleneck value $2^{i'-1} (1+\epsilon)^{\lc \frac{1}{\log (1 + \epsilon)} \rc -1} m$ failed, 
and therefore
\begin{eqnarray*}
 B = 2^{i'} m \ge B^* > 2^{i'-1} (1+\epsilon)^{\lc \frac{1}{\log (1 + \epsilon)} \rc -1} m,
\end{eqnarray*}
which also implies $B \le (1 + \epsilon) B^*$. If $i'=j'=0$, then the algorithm found an optimal solution with bottleneck value $m$.

Since for $\epsilon = \Order(1)$ we have $\log (1 + \epsilon) = \Order(\epsilon)$ and $\delta = \Theta(\epsilon)$, 
the space requirement for the runs of \textsc{Probe} is 
$\Order \left( \frac{1}{\epsilon} \log (\frac{1}{\epsilon}) \log ({mn^p})  \right)$, 
and if we run \textsc{ProbeB} the space requirement is 
$\Order \left( \frac{1}{\epsilon} \log (\frac{1}{\epsilon}) \log (\frac{mp}{\epsilon})   \right)$. 

 \item $B^* > {m}/{\delta}^{2}: $ 
 By 
Lemma~\ref{thm:probeext}, \textsc{ProbeExt} and \textsc{ProbeExtB} are
$(2 + \delta)$-approximation algorithms for any $\alpha$. Let $B = 2^{i'} (1+ \frac{1}{2} \epsilon)^{j'} m$ be the smallest value output by any
of the \textsc{ProbeExt} runs, for some values of $i'$ and $j'$.
Suppose that $j' > 0$. Then the run with $j = j' - 1$ reports the bottleneck value 
$2^{i' + 1} (1+ \frac{1}{2} \epsilon)^{j'-1} m$. Clearly, it cannot return 
$2^{k} (1+ \frac{1}{2} \epsilon)^{j'-1} m$ for $k \le i'$ since $B$ is the smallest
returned value. On the other hand, it cannot return a bottleneck with $k \ge i' + 2$ since then it would have an approximation ratio larger than $2+\delta$, contradicting Lemma~\ref{thm:probeext}.
\begin{eqnarray*}
 2^{i' + 2} (1+ \frac{1}{2} \epsilon)^{j'-1} m = \frac{4} {1+\epsilon} \cdot B \ge \frac{4} {1+\epsilon} B^* > (2 + \delta) B^*.
\end{eqnarray*}
Thus, the run with $j = j' - 1$ returns the bottleneck value $2^{i' + 1} (1+ \frac{1}{2} \epsilon)^{j'-1} m$.
Since this is a $2+\delta$ approximation, we obtain
\begin{eqnarray*}
 (2 + \delta) B^* & \ge & 2^{i' + 1} (1+ \frac{1}{2} \epsilon)^{j'-1} m = \frac{2}{1+\frac{1}{2} \epsilon} B \Rightarrow \\
 B  & \le & \frac{(2+ \delta) (1 + \frac{1}{2} \epsilon) }{2} B^* = (1+\epsilon) B^*.
\end{eqnarray*}
 
Suppose now that $j'=0$ and $i' > 0$. Consider the run for $j = \lc \frac{1}{\log(1+\frac{1}{2} \epsilon) } \rc - 1$. 
By a similar argument as before, the run outputs the bottleneck value $2^{i'}(1+ \frac{1}{2} \epsilon)^{\lc \frac{1}{\log(1+\frac{1}{2} \epsilon) } \rc - 1}m = B (1+ \frac{1}{2} \epsilon)^{\lc \frac{1}{\log(1+\frac{1}{2} \epsilon) } \rc - 1}$. This implies that
\begin{eqnarray*}
 B^* \ge \frac{B (1+ \frac{1}{2} \epsilon)^{\lc \frac{1}{\log(1+\frac{1}{2} \epsilon) } \rc - 1}}{2+\delta} 
 \ge \frac{B (1+ \frac{1}{2} \epsilon)^{ \frac{1}{\log(1+\frac{1}{2} \epsilon) }  - 1}}{2+\delta},
\end{eqnarray*}
which also implies that $B \le (1+\epsilon) B^*$. Finally, if $i' = j' = 0$, then the algorithm did not
perform a merge operation and found an optimal solution with bottleneck value $m$. 

Since $ \frac{1}{\log(1 + \frac{1}{2} \epsilon)} = \Order(\frac{1}{\epsilon})$, the space requirement for 
the runs of \textsc{ProbeExt} is $\Order \left(\frac{1}{\epsilon}  \log (mn^p) \right)$,
and if we run \textsc{ProbeExtB} the space requirement is $\Order \left(\frac{1}{\epsilon}  \log (mn)  \right)$.
\end{enumerate}

For \textsc{Part}, the space requirements are dominated by the runs of the \textsc{Probe} algorithm. For \textsc{PartB},
we obtain space $\Order \left( \frac{1}{\epsilon} \log(\frac{1}{\epsilon}) \log(\frac{mp}{\epsilon}) 
+ \frac{1}{\epsilon}  \log (mn) \right)$, and using $p \le n$ this simplifies to
$\Order \left( \frac{1}{\epsilon} \log(\frac{1}{\epsilon}) \log(\frac{mn}{\epsilon}) \right)$.
\qed
\end{proof}

\section{Algorithms for Unknown $m$ } \label{sec:unknown-m}

\begin{minipage}{8cm}
In this section, we present simple $2$-approximation algorithms for \textsc{Part} (resp.\ \textsc{PartB}) that 
do not require the knowledge of any parameter in advance.






Our algorithm for \textsc{Part} works as follows: Suppose that the algorithm has seen the elements $X_1, \dots, X_i$ and it has partitioned them into $p$ parts with weights $w_1, \dots, w_p$. If the algorithm now reads the input $x = X_{i+1}$, it will run the \textsc{Probe} algorithm on the sequence $w_1, w_2, \dots, w_p, x$ with a bottleneck value $B$ that is at most twice the optimum for a partitioning of $X_1, \dots, X_{i+1}$ into $p$ parts. See Algorithm~\ref{algo:unknown-m-part} and Theorem~\ref{thm:unknown-m-part} for further details. The algorithm for \textsc{PartB} is even simpler, and is described in Theorem~\ref{thm:unknown-m-partb}.
\end{minipage} \hfill \begin{minipage}{7.5cm}

\vspace{-0.8cm}

\begin{algorithm}[H]
\caption{$2$-Approximation for \textsc{Part} \label{algo:unknown-m-part}}
 \begin{algorithmic}
   \STATE $w_i \gets 0$ for all $1 \le i \le p$ \COMMENT{partition weights}
   \STATE $I \gets 1$ \COMMENT{current element index}
   \STATE $S \gets 0$ \COMMENT{current total weight of input stream}
   \STATE $m \gets 0$ \COMMENT{current maximum}
   \WHILE{input stream not empty}
     \STATE $x \gets$ next integer from stream
     \STATE $I \gets I + 1$, $S \gets S + x$, $m \gets \max \{ m, x \}$
     \STATE $B \gets 2 \cdot \max \{ m, S / p \}$ \COMMENT{update bottleneck value}
     \STATE Run \textsc{Probe}$(B)$ on sequence $w_1, w_2, \dots, w_p, x$, 
     \STATE   \quad and store the new separators $s_0, \dots, s_p$ 
     \STATE   \quad and the new partition weights $w_1, \dots, w_p$ \label{line:192}
   \ENDWHILE 
   \RETURN $B, (1, s_1, \dots, s_{p-1}, I)$
 \end{algorithmic}
\end{algorithm}
\end{minipage}

\begin{theorem} \label{thm:unknown-m-part}
 Algorithm~\ref{algo:unknown-m-part} is a $2$-approximation algorithm for \textsc{Part} and uses space $\Order(p \log (mn) )$.
\end{theorem}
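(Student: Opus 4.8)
The plan is to analyze Algorithm~\ref{algo:unknown-m-part} by induction on the stream position $i$, maintaining the invariant that after processing $X_1, \dots, X_i$, the stored weights $w_1, \dots, w_p$ form a partitioning of the prefix whose bottleneck value is at most $2 \cdot \max\{ m_i, S_i/p \}$, where $m_i$ and $S_i$ denote the running maximum and running total weight after $i$ elements. The base case is trivial since all $w_i = 0$. For the inductive step, suppose the invariant holds at step $i$, and the algorithm now reads $x = X_{i+1}$, updating $m_{i+1} = \max\{m_i, x\}$ and $S_{i+1} = S_i + x$. The key observation is that the sequence $w_1, \dots, w_p, x$ on which we run \textsc{Probe} has total weight exactly $S_{i+1}$ and maximum element at most $\max\{m_{i+1}, \text{previous bottleneck}\}$. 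Since each $w_j$ is at most the previous bottleneck value $\le 2\max\{m_i, S_i/p\} \le 2\max\{m_{i+1}, S_{i+1}/p\} = B_{i+1}$, every element of the sequence $w_1,\dots,w_p,x$ is at most $B_{i+1}$, so \textsc{Probe}$(B_{i+1})$ does not fail on the element-too-large condition.

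Next I would argue that \textsc{Probe}$(B_{i+1}, p)$ actually succeeds, i.e., produces at most $p$ parts. By Lemma~\ref{lem:opt-bottleneck-bound} applied to the sequence $w_1, \dots, w_p, x$ (which has total weight $S_{i+1}$ and $p+1$ elements, but is to be split into $p$ parts), the optimal bottleneck for splitting this $(p+1)$-element sequence into $p$ contiguous blocks is at most $\lfloor \frac{S_{i+1} + (p-1) M}{p} \rfloor$ where $M$ is the maximum element of the sequence; but more directly, since merging any two adjacent blocks among $p+1$ singletons of total weight $S_{i+1}$ and max $\le B_{i+1}/2$ yields a partition of weight $\le 2\cdot\max\{B_{i+1}/2, S_{i+1}/p\}$... the cleanest route is: splitting $p+1$ items into $p$ blocks means exactly one block contains two consecutive items and the rest are singletons; choosing the pair greedily, the optimal bottleneck for this subproblem is at most $\max\{ S_{i+1}/p, \text{(largest element)}\} + \text{(second largest element)} \le 2 B_{i+1}$ is too weak — instead use that each $w_j \le B_{i+1}$ and $x \le B_{i+1}$ together with $\sum w_j + x = S_{i+1}$, so by a direct greedy argument \textsc{Probe}$(B_{i+1})$ fits everything into $p$ blocks whenever $B_{i+1} \ge \max\{$max element, $S_{i+1}/p\}$, which holds since $B_{i+1} = 2\max\{m_{i+1}, S_{i+1}/p\}$ and max element $\le B_{i+1}$. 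This re-establishes the invariant at step $i+1$. At the end of the stream, $m_n = m$, $S_n = S$, and the returned bottleneck is $2\max\{m, S/p\} \le 2 B^*$ by Lemma~\ref{lem:opt-bottleneck-bound}, giving the claimed $2$-approximation.

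For the space bound, the algorithm stores: the $p$ partition weights $w_1, \dots, w_p$, each bounded by the final bottleneck $\Order(mn/p)$ and hence $\Order(\log(mn))$ bits each, contributing $\Order(p\log(mn))$; the separators $s_0, \dots, s_p$, each at most $n+1$, contributing $\Order(p\log n)$; and the scalars $I, S, m, B$, each $\Order(\log(mn))$ bits. A single invocation of \textsc{Probe} on a length-$(p+1)$ sequence uses $\Order(p\log p + \log B + \log(\text{max element})) = \Order(p\log(mn))$ working space by Lemma~\ref{lem:probe} (with $n$ there replaced by $p+1$), which is reused across iterations. Summing gives $\Order(p\log(mn))$ total.

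The main obstacle I anticipate is the second paragraph: carefully verifying that \textsc{Probe}$(B_{i+1}, p)$ succeeds on the sequence $w_1, \dots, w_p, x$ — i.e., that the running bottleneck estimate $2\max\{m_{i+1}, S_{i+1}/p\}$ is genuinely large enough to pack the $p+1$ items into $p$ blocks. This requires invoking Lemma~\ref{lem:opt-bottleneck-bound} on the auxiliary sequence with the right identification of its parameters (its "$m$" is the max of $w_1,\dots,w_p,x$, not the global $m$), and checking that the greedy \textsc{Probe} does not overshoot; the bounds $w_j \le B_{i+1}$ (from the inductive hypothesis and monotonicity of $\max\{m_i, S_i/p\}$ in $i$) are what make this go through. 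Everything else is routine bookkeeping.
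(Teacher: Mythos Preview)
Your overall plan (induction on the prefix, maintain the invariant that the current weights $w_1,\dots,w_p$ have bottleneck at most $B_i = 2\max\{m_i,S_i/p\}$, then conclude $B_n \le 2B^*$ via Lemma~\ref{lem:opt-bottleneck-bound}) is the right one and matches the paper. The space analysis is also fine.

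The gap is exactly where you anticipated it. Your claimed sufficient condition ``\textsc{Probe}$(B)$ fits $p+1$ items into $p$ blocks whenever $B \ge \max\{\text{max element},\,S/p\}$'' is false. Counterexample: $p=2$, items $3,3,3$; then $S/p=4.5$ and the max element is $3$, so $B=5$ satisfies your condition, yet \textsc{Probe}$(5)$ produces three singleton blocks. Invoking Lemma~\ref{lem:opt-bottleneck-bound} on the auxiliary sequence does not rescue this either: with max element $M\le B$ and $S/p\le B/2$ it only gives optimal bottleneck $\le S/p + M \le 3B/2$, not $\le B$.

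What actually works (and is what the paper does) uses the factor $2$ that you already have, namely $B_{i+1}\ge 2S_{i+1}/p$. Observe that if \textsc{Probe}$(B)$ fails on the $(p{+}1)$-item sequence $w_1,\dots,w_p,x$ with every item $\le B$, then each of the $p$ blocks must be a singleton, so $w_j+w_{j+1}>B$ for every $j=1,\dots,p$ (writing $w_{p+1}=x$). Pairing disjoint adjacent terms,
\[
S_{i+1}=\sum_{j=1}^{p+1}w_j \;\ge\; \sum_{j=1}^{\lfloor (p+1)/2\rfloor}(w_{2j-1}+w_{2j}) \;>\; \Big\lfloor\tfrac{p+1}{2}\Big\rfloor\cdot B \;\ge\; \frac{p}{2}\cdot \frac{2S_{i+1}}{p}=S_{i+1},
\]
a contradiction. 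This is the missing step; once you insert it, your inductive argument goes through.
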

\begin{proof}
 First, suppose that the run of \textsc{Probe} in 
 Algorithm~\ref{algo:unknown-m-part} succeeds in 
every iteration. Then, the last bottleneck value is 
$B = 2 \cdot \max \{ m, S / p \}$. By Lemma~\ref{lem:opt-bottleneck-bound}, we have 
$\max \{m, S/p \} \le B^* \le \max \{m, S/p \} + m$, and since $m \le \max \{m, S/p \}$ we have $\max \{m, S/p \} \le B^* \le 2 \cdot \max \{m, S/p \}$ which proves the approximation factor of $2$. 

Denote $w_{p+1}=x$. It remains to prove that the run of \textsc{Probe} always succeeds, i.e., that the 
optimal bottleneck value of the sequence $w_1, w_2, \dots w_p, w_{p+1}$ is at most $B = 2 \cdot \max \{ m, S / p \}$ in every round.
Indeed, if \textsc{Probe}$(B)$ does not succeed in creating $p$ partitions, then $w_i + w_{i+1}> B \ge 2 S / p$ must hold for all $1 \le i \le p$. But then:
\begin{eqnarray*}
 S  = \sum_{i=1}^{p+1} {w_i} \ge \sum_{i=1}^{\lf (p+1)/2 \rf} \left( w_{2i - 1} + w_{2i} \right) > \lf (p+1)/2 \rf \cdot 2 S / p \ge S,
\end{eqnarray*}

a contradiction, which proves the correctness of the algorithm.
The space requirement is dominated by the weights of the $p$ partitions, yielding the bound $\Order(p \log(mn))$. \qed
\end{proof}

\begin{theorem} \label{thm:unknown-m-partb}
 There exists a $2$-approximation algorithm for \textsc{PartB} that uses $\Order(\log(mn))$ space.
\end{theorem}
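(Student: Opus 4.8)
The plan is to observe that \textsc{PartB} only asks us to report a number $B$ sandwiched between the optimal bottleneck value $B^*$ and $2B^*$, and that Lemma~\ref{lem:opt-bottleneck-bound} already pins $B^*$ down to within a factor of $2$ given only the total weight $S$ and the maximum element $m$. Accordingly, the algorithm I would use maintains exactly two running quantities as the stream is consumed: the partial sum $S$ of the elements seen so far and the partial maximum $m$, each updated in $\Order(1)$ time per element. When the stream ends, output $B := 2 \max\{\lc S/p \rc, m\}$.

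For correctness, first I would invoke the lower bound of Lemma~\ref{lem:opt-bottleneck-bound}, namely $B^* \ge \max\{\lc S/p \rc, m\}$, which immediately yields $B = 2 \max\{\lc S/p \rc, m\} \le 2 B^*$. Then I would invoke the upper bound of the same lemma, $B^* \le \lf (S + (p-1)m)/p \rf \le S/p + m$, and combine it with the elementary inequality $\max\{a,b\} \ge (a+b)/2$ applied to $a = S/p$ and $b = m$, obtaining $B^* \le S/p + m \le 2\max\{S/p, m\} \le 2\max\{\lc S/p \rc, m\} = B$. Hence $B^* \le B \le 2B^*$, so $B$ is a legal output for \textsc{PartB} (it never undershoots $B^*$) and the approximation factor is $2$.

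For the space bound, note that $S \le nm$ throughout, so the register holding $S$ needs $\Order(\log(nm))$ bits, the register holding $m$ needs $\Order(\log m)$ bits, and the value $p$ (already assumed to be in memory) together with the final arithmetic that forms $B$ also fit in $\Order(\log(mn))$ bits; the total space is therefore $\Order(\log(mn))$.

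I do not expect any real obstacle: the only points requiring care are the direction of the rounding --- we must ensure the reported value never drops below $B^*$, which is why a ceiling is taken inside and the factor $2$ outside --- and the observation that the additive ``$+m$'' slack in the upper bound of Lemma~\ref{lem:opt-bottleneck-bound} is fully absorbed by the factor $2$. Both are one-line verifications.
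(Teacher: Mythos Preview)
Your proposal is correct and follows essentially the same approach as the paper: compute $S$ and $m$ in one pass, then invoke the two-sided bound of Lemma~\ref{lem:opt-bottleneck-bound} to output a value in $[B^*,2B^*]$ using $\Order(\log(mn))$ space. The only cosmetic difference is the reported value --- the paper outputs $\max\{m,S/p\}+m$ rather than $2\max\{\lc S/p\rc,m\}$ --- but the correctness argument and space analysis are identical.
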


\begin{proof}
We simply compute in one pass the total weight $S$ and the maximum $m$, then output $\max \{m, \frac{S}{p} \} + m$.
By Lemma~\ref{lem:opt-bottleneck-bound} we have $\max \{m, \frac{S}{p} \} \le B^* \le \max \{m, \frac{S}{p} \}+m$. Hence, the approximation ratio is at most $1+m/\max \{m, \frac{S}{p} \} \le 2$.
The total weight of the stream is at most $m n$, therefore the space usage is $\Order(\log (mn))$. \qed
\end{proof}

\section{Space Lower Bounds} \label{sec:lbs}

\subsection{A Linear Space Lower Bound for Exact Algorithms} \label{sec:lb-exact}
In this section, we show that any possibly randomized exact streaming algorithm for either \textsc{Part} or \textsc{PartB}
that performs one pass over the input requires $\Omega(n)$ space. We show this by a reduction from the \textsc{Index} problem in 
one-way two-party communication complexity.

\begin{definition}[\textsc{Index} Problem]
  Let $S = (S_1, \dots, S_N)$ where $S \in \{0,1\}^N$, and let $I \in \{1,\dots,N\}$. 
  Alice is given $S$, Bob is given $I$.
Alice sends message $M$ to Bob and upon reception Bob outputs $S_I$.  
\end{definition}

We consider a version of \textsc{Index} where the index $I$ is chosen from the set $\{\lc N/2 \rc, \dots, N\}$ uniformly at random.
It is well-known \cite{kn97} that the one-way randomized communication complexity of \textsc{Index} is 
$\Omega(N)$,
and the modification in the input distribution restricting the index $I$ to be chosen from the set 
$\{\lc N/2 \rc, \dots, N\}$ does not change its hardness.

\begin{lemma}[Hardness of the Index Problem] \label{fact:hardness-AI}
 If $S$ is chosen uniformly at random from $\{0,1\}^N$, and $I$ is chosen uniformly at random from the set 
$\{\lceil N/2 \rceil, \dots, N\}$ and the failure probability of the protocol is at most $1/3$, then 
$\Exp_{S} |M| = \Omega(N)$.\ \ \qed 
\end{lemma}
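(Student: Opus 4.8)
The statement is the classical one-way communication lower bound for \textsc{Index}, and the plan is to reprove it by a short information-theoretic argument that automatically handles the two mild peculiarities of the present setting: the index $I$ ranges only over the upper half $J := \{\lceil N/2\rceil,\dots,N\}$ of coordinates, and we want a bound on the \emph{expected} length of $M$ over a random $S$. First I would eliminate the protocol's randomness: if the protocol errs with probability at most $1/3$ averaged over $S$, $I$, and its coins, then by averaging there is a fixing of the coins making the resulting \emph{deterministic} protocol err with probability at most $1/3$ over $(S,I)$. Fix such a protocol; then $M=M(S)$ and Bob's output is $g(M,I)$. With $|J|\ge N/2$ and $\delta_i := \Pr_S[g(M,i)\ne S_i]$ for $i\in J$, the error guarantee is exactly $\Exp_i\,\delta_i \le 1/3$ for $i$ uniform in $J$.

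The heart of the proof is Fano's inequality together with subadditivity of entropy. Since $S$ is uniform, the bits $(S_i)_{i\in J}$ are independent, each of entropy $1$. For fixed $i$, the map $m\mapsto g(m,i)$ estimates $S_i$ from $M$ with error $\delta_i$, so Fano gives $\rH(S_i\mid M)\le h(\delta_i)$, where $h$ is the binary entropy function. Writing $S_J=(S_i)_{i\in J}$,
\begin{eqnarray*}
\rI(M;S) & \ge & \rI(M;S_J) \ = \ \rH(S_J)-\rH(S_J\mid M) \ \ge \ \sum_{i\in J}\bigl(\rH(S_i)-\rH(S_i\mid M)\bigr) \ \ge \ \sum_{i\in J}(1-h(\delta_i)),
\end{eqnarray*}
where the second inequality uses $\rH(S_J\mid M)\le\sum_{i\in J}\rH(S_i\mid M)$. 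Concavity (and monotonicity on $[0,\tfrac12]$) of $h$ together with $\Exp_i\delta_i\le\tfrac13$ gives $\tfrac{1}{|J|}\sum_{i\in J}h(\delta_i)\le h(\tfrac13)<1$, hence $\rI(M;S)\ge |J|\,(1-h(\tfrac13))=\Omega(N)$.

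To finish, I would pass from mutual information to message length: since Bob must recognize where Alice's message ends, the messages form a prefix-free code, so the source-coding bound yields $\Exp_S|M|\ge \rH(M)\ge \rI(M;S)=\Omega(N)$, which is the claim. There is no genuine obstacle since the result is standard; the only subtlety worth flagging is that the recoveries are guaranteed only \emph{on average} over $i$ and $S$, which is precisely why the argument routes through $\rI(M;S_J)$ and Fano rather than a worst-case fooling-set bound, and why restricting $I$ to $J$ costs merely a constant factor ($1-h(\tfrac13)$ in place of $1$). As an alternative one could give a direct encoding/counting argument — a message of length $<cN$ would, after fixing the coins, yield a procedure reconstructing $\ge N/2$ bits of a uniformly random string from too few bits — but the Fano route is cleaner.
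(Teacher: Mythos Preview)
Your argument is correct and is the standard information-theoretic proof of the one-way \textsc{Index} lower bound; in particular, the averaging to derandomize, Fano's inequality applied coordinatewise, the subadditivity step $\rH(S_J\mid M)\le\sum_{i\in J}\rH(S_i\mid M)$, the Jensen step using concavity of $h$, and the prefix-free source-coding bound $\Exp|M|\ge\rH(M)\ge\rI(M;S)$ are all sound, and the restriction of $I$ to the upper half indeed costs only a constant factor.

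There is nothing to compare against, however: the paper does not prove this lemma. It is stated as a well-known fact (hence the \qed\ attached to the statement itself) with a reference to the Kushilevitz--Nisan textbook, together with the remark that restricting the index distribution to $\{\lceil N/2\rceil,\dots,N\}$ does not affect the $\Omega(N)$ bound. Your write-up supplies exactly the missing justification for that remark, so it is a fine addition rather than an alternative to anything in the paper.
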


\textit{Reduction.} 
Given a streaming algorithm $ALG$ that solves \textsc{Part} or \textsc{PartB} on a stream of length at most $3n$
using space $s$, we specify a protocol for an arbitrary instance $(S,I)$ of the one-way two-party communication 
problem \textsc{Index} with $|S| = n$, such that the message size is at most $s$.

Remember that Alice holds $S \in \{0,1\}^N$ and Bob holds
$I \ge \lceil N/2 \rceil$. Our protocol is the following: Alice generates the sequence $Y \in \{1,3\}^{2N}$ such that $Y_i = 2 \cdot S_{i/2} + 1$ for even $i$, and 
$Y_i = 4 - Y_{i+1}$ for odd $i$. Bob generates the sequence $Z = \underbrace{4 \dots 4}_{2I - N - 1}2$. 

Alice runs $ALG$ on the sequence 
$Y$ with the number of partitions $p=2$. Once $Y$ is entirely processed, she sends the resulting memory state of $ALG$ to Bob. Bob continues running $ALG$ on 
Alice's final memory state and feeds the sequence $Z$ into $ALG$. Observe that from the point of view of $ALG$ it is as if the input stream were the concatenation of $Y$ and $Z$. The message size of the protocol equals the space usage of $ALG$ after processing $Y$. If $ALG$ is an algorithm for \textsc{Part} then 
$ALG$ outputs the separator $s_1$ that separates
the two partitions. If $s_1$ is even then Bob outputs $0$, and if $s_1$ is odd then Bob outputs $1$.
If $ALG$ is an algorithm for \textsc{PartB} then $ALG$ outputs the optimal bottleneck value $B$.
If $B = 4I - 1$ then Bob outputs $0$, otherwise Bob outputs $1$.

We prove that the above protocol is correct (i.e.\ the value returned by Bob is $S_I$), which immediately yields the space lower bound.

\begin{theorem} \label{thm:lb-exact}
Any possibly randomized exact one-pass streaming algorithm for \textsc{Part} or \textsc{PartB} requires space $\Omega(n)$.
\end{theorem}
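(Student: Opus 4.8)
The plan is to verify that the reduction described just before the theorem is correct, i.e., that Bob's output always equals $S_I$, which combined with Lemma~\ref{fact:hardness-AI} immediately gives the $\Omega(n)$ space bound (noting that the stream $Y \circ Z$ has length $2N + (2I-N) \le 2N + N = 3N$, matching the claimed stream length bound with $n = N$).

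First I would analyze the structure of the combined stream $X = Y \circ Z$. Alice's part $Y$ has the property that each consecutive pair $(Y_{2k-1}, Y_{2k})$ sums to exactly $4$, so $Y$ has total weight $4N$, and crucially every prefix of $Y$ ending at an even position $2k$ has weight $4k$. Bob's part $Z$ consists of $2I-N-1$ copies of $4$ followed by a single $2$, so $Z$ has weight $4(2I-N-1)+2 = 8I-4N-2$ and the whole stream has weight $S_{\text{tot}} = 4N + 8I - 4N - 2 = 8I - 2$. Since $p=2$, the optimal bottleneck value satisfies $B^* \ge \lceil S_{\text{tot}}/2 \rceil = 4I-1$ by Lemma~\ref{lem:opt-bottleneck-bound}. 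Next I would show that $B^* = 4I-1$ is achievable precisely when $S_I = 0$, and $B^* = 4I$ otherwise. The key observation is that the only ``cut point'' achieving a balanced split is somewhere inside $Z$ (since a cut inside $Y$ leaves the whole of $Y$ on one side once we account for where the weight sits), and the achievable left-partition weights around the balance point differ depending on the value $Y_{2I-N} = Y$ at the relevant position, which equals $2S_I + 1 \in \{1,3\}$.

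More concretely, I would track the weight as the separator $s_1$ sweeps across the stream. For the separator placed just after position $Y$ has been entirely consumed plus some prefix of $Z$: placing $s_1$ after $j$ of the leading $4$'s in $Z$ gives left weight $4N + 4j$ and right weight $8I - 2 - 4N - 4j$; balancing these forces a non-integer split, and the best we can do with integer cuts inside the block of $4$'s is bottleneck $\max\{4N+4j, 8I-2-4N-4j\}$, minimized near $j$ with $4N+4j \approx 4I-1$, giving bottleneck $4I$ (since $4N+4j$ is a multiple of $4$, it cannot equal $4I-1$). So to reach $4I-1$ we must instead place $s_1$ inside $Y$: placing it right before $Y_{2I-N}$ gives left weight (a prefix of $Y$) of weight $2(I-N) \cdot 2 = 4(I-N) \cdot \tfrac{1}{?}$ — here I would carefully compute that the prefix $Y_1 \dots Y_{2I-N-1}$ has weight $4(I-N) \cdot \tfrac{?}{?} + $ (a correction term depending on parity), and adding or not adding the element $Y_{2I-N} = 2S_I+1$ lets the left weight hit exactly $4I-1$ on one side when $S_I=0$. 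The parity bookkeeping (odd vs.\ even positions, and the $Y_i = 4 - Y_{i+1}$ relation for odd $i$) is exactly what makes the value $S_I$ detectable: when $S_I = 0$, one of the two candidate separators (even vs.\ odd position) yields both partitions with weight $\le 4I-1$, and $s_1$ is even (resp.\ the bottleneck is $4I-1$); when $S_I = 1$, no even-position separator works and the best is $4I$ with $s_1$ odd (resp.\ bottleneck $4I$).

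The main obstacle I anticipate is the case analysis on the parity of $N$ and the precise alignment of positions in $Y$ versus the offset $2I-N-1$ in $Z$ — getting the prefix-weight formula for $Y$ exactly right and confirming that for $S_I = 1$ there is genuinely no way to achieve bottleneck $4I-1$ (one must rule out all separator positions, including those deep inside $Y$, where the left partition might be light but the right partition, containing most of $Y$ plus all of $Z$, is far too heavy). Once the weight arithmetic is pinned down, correctness of both the \textsc{Part} decoding (even/odd $s_1$) and the \textsc{PartB} decoding ($B = 4I-1$ vs.\ $B = 4I$) follows, and then any one-pass streaming algorithm using $s$ bits of space yields an \textsc{Index} protocol with message length $s$; by Lemma~\ref{fact:hardness-AI} we get $s = \Omega(N) = \Omega(n)$, completing the proof. \qed
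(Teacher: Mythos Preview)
Your approach is exactly the paper's: compute the total weight $8I-2$, locate the optimal separator, and read off $S_I$ from the parity of $s_1^*$ (for \textsc{Part}) or from whether $B^*=4I-1$ (for \textsc{PartB}). The plan is sound, but two concrete errors in the execution need fixing.

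First, the sentence ``the only cut point achieving a balanced split is somewhere inside $Z$'' is backwards. Any separator in $Z$ places all of $Y$ (weight $4N\ge 4I$) on the left, so such a cut has bottleneck at least $4I$; the near-balanced cut must lie inside $Y$. You effectively reverse yourself two sentences later, so just delete the earlier claim.

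Second, and this is the real slip, the relevant position in $Y$ is $2I$, not $2I-N$. The prefix $Y_1,\dots,Y_{2k}$ has weight exactly $4k$, while the prefix $Y_1,\dots,Y_{2k-1}$ has weight $4(k-1)+Y_{2k-1}=4k-1-2S_k$ (using $Y_{2k-1}=4-Y_{2k}=3-2S_k$). Setting the latter equal to $4I-1$ forces $k=I$ and $S_I=0$; then $s_1^*=2I$ is even and $B^*=4I-1$. If $S_I=1$, no prefix of $Y$ has weight $4I-1$ (even-length prefixes are multiples of $4$; the odd-length prefix ending at $2I-1$ has weight $4I-3$), and the best choice is $s_1^*=2I+1$ (odd) with left weight $4I$ and $B^*=4I$. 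That is the entire computation: there is no dependence on the parity of $N$, so your anticipated case split is unnecessary. With the index corrected, your decoding rules for both \textsc{Part} and \textsc{PartB} are verified, and Lemma~\ref{fact:hardness-AI} finishes the argument.
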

\begin{proof}
First observe that $\sum_i Y_i + \sum_i Z_i = 4 \cdot N + (2I - N - 1) \cdot 4 + 2 = 8I - 2$.
Let $s_1^*$ denote the optimal split position. Suppose that a perfect balancing is achieved and the
optimal bottleneck value is $4I - 1$. Since for all $i=1,\dots,N$ we have $Y_{2i-1} + Y_{2i} = 4$,
this can only be achieved if $s_1^*$ is even and $Y_{s_1^*} = 1$ which implies that $S_{s_1^*/2} = S_{I} = 0$. Suppose now that a perfect balancing 
cannot be achieved. This can only happen if $s_1^*$ is odd and $Y_{s_1^*-1} = 3$ which implies that
$S_{(s_1^*-1) / 2} = S_I = 1$. 
Thus, the protocol is correct in both cases, and $ALG$ can be used to solve \textsc{Index}. Lemma~\ref{fact:hardness-AI} gives hence a lower bound for the 
space requirements of $ALG$. 
\qed
\end{proof}

\subsection{$\Omega(\frac{1}{\epsilon} \log n)$ Space Lower Bound for Approximation Algorithms} \label{sec:lb-approx}
In this section, we prove an $\Omega( \frac{1}{\epsilon} \log n)$ space lower bound for
one-pass streaming algorithms for \textsc{Part} that compute a $(1+\epsilon)$-approximation.
We prove this lower bound in the one-way two-party communication setting for instances of \textsc{Part}
with $m = 1$ and $p = 2$. Alice is given a sequence $Y \in \{0, 1\}^n$ and Bob is given a sequence $Z \in \{0, 1 \}^n$,
and they have to split the sequence $X = Y \circ Z$ into two parts. Alice sends a message to Bob, and upon reception,
Bob outputs the separator. We describe now the hard input distribution. 

Let $t$ be an integer that is to be determined later. Alice's input and Bob's input are independent from each other and
they are constructed as follows:

\textbf{Alice's input} $Y$ is a sequence of length $n$ with $2(t-1)$ leading $1$s, followed by an arbitrary sequence 
of length $n - 3t + 2$, with elements from $\{0,11\}$ ($11$ is a pair of ones), where the number of $11$s is exactly $t$. 
Denote by $\mathcal{Y}$ the set of all such sequences. Then $Y$ is chosen uniformly at random from $\mathcal{Y}$.
Clearly, the weight of $Y$ is $4t-2$, and 
$
|\mathcal{Y}| = { n - 3t + 2 \choose t }. \label{eqn:392}
$

 \textbf{Bob's input} $Z$ is a sequence of length $n$ with the first $4(i-1)$ elements $1$, and the remaining elements $0$, for some $i \in \{1, 2, \dots, t \}$. Denote all such sequences as $\mathcal{Z}$. Then $Z$ is chosen uniformly at random from $\mathcal{Z}$.
Observe that the weight of $Z$ varies from $0$ to $4(t-1)$, and $|\mathcal{Z}| = t$.


Note that an optimal partitioning of any $Y \circ Z$ instance splits one of the $11$s in the second part of Alice's input. 

\emph{Example:} Let $t=2$ and $n=10$ and $p=2$. Suppose that Alice holds $Y = 11\,00110110$. Bob's possible inputs are $Z_1 = 0000000000$ and 
$Z_2 = 1111000000$ of weight $0$ and $4$. The optimal partitioning of $Y \circ Z_1$ is $11\, 00 1 \, | \, 10110 \, 0 \dots 0$ and of $Y \circ Z_2$ is $11 \, 001101 \, | \, 1 0 \, 11110 \dots 0$.

We give a lower bound on the space requirement of any possibly randomized communication protocol that solves 
instances of $\mathcal{Y} \times \mathcal{Z}$ exactly.

\vspace{-0.15cm}

\begin{lemma} \label{lem:exact-lb}
 Any randomized one-way two-party communication protocol with error at most $\delta > 0$ that solves \textsc{Part} on instances of
 $\mathcal{Y} \times \mathcal{Z}$ has communication complexity at least
\begin{eqnarray*}
\log \left( \frac{ { n - 3t + 2 \choose t } }{8 {t  \choose 4 \delta t } n^{4 \delta t}} \right). 
\end{eqnarray*}
\end{lemma}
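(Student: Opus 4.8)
The plan is to reduce from the randomized one-way communication complexity of the Augmented Indexing problem (or a direct fooling-set/information-theoretic argument), but the cleanest route here is a direct counting argument: I will show that Alice's message must, with good probability, allow Bob to recover enough information about $Y$ to pin down $Y$ among a set whose size is essentially $|\mathcal{Y}|$, up to the ``slack'' introduced by the error probability $\delta$ and by the fact that Bob only learns one separator position per choice of his input.

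First I would fix a deterministic protocol obtained from the randomized one by averaging (Yao's principle): if the randomized protocol has error $\delta$ over the product distribution $\mathcal{Y}\times\mathcal{Z}$, then there is a deterministic protocol with error at most $\delta$ over the same distribution, whose message length is at most the communication complexity. The key structural observation is the one already highlighted before the lemma: for a fixed $Y\in\mathcal{Y}$ and a fixed Bob-input $Z_i\in\mathcal{Z}$ (of weight $4(i-1)$), the optimal partitioning splits the stream so that the separator falls inside the $i$-th block of ``$11$'' occurring in the free part of $Y$ — more precisely, the separator position is a deterministic function $\sigma(Y,i)$ that encodes the location of the $i$-th ``$11$'' in $Y$. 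Hence, if the protocol is correct on the pair $(Y,Z_i)$, Bob's output reveals the position of the $i$-th ``$11$'' of $Y$. Running over all $i\in\{1,\dots,t\}$ on which the protocol is correct for this particular $Y$, Bob (who only ever sees the single message $M=M(Y)$) can reconstruct the positions of all but the ``bad'' indices $i$, i.e.\ of at least $t - 2\delta t$ of the $t$ ``$11$''-blocks, for at least half of the $Y$'s (by Markov on the per-$Y$ error).

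Next I would turn this into a counting bound. Let $Y$ be ``good'': at least $t-2\delta t \ge t - 4\delta t$... actually I will use the set $\mathcal{Y}^{\mathrm{good}}$ of $Y$ for which at most $4\delta t$ indices $i$ are wrong (by Markov, $|\mathcal{Y}^{\mathrm{good}}|\ge \tfrac12 |\mathcal{Y}|$, absorbing a factor into the $8$ in the denominator). For such a $Y$, the message $M(Y)$ together with the (at most) $4\delta t$ ``erased'' positions determines $Y$: from $M(Y)$ Bob recovers all ``$11$'' positions except at most $4\delta t$ of them, and the missing ones can be specified by choosing which $4\delta t$ of the $t$ blocks are missing ($\binom{t}{4\delta t}$ choices) and where each missing block sits (at most $n^{4\delta t}$ choices for their positions). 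Therefore the map $Y\mapsto M(Y)$ is injective on $\mathcal{Y}^{\mathrm{good}}$ after appending an auxiliary string from a set of size at most $\binom{t}{4\delta t} n^{4\delta t}$, so the number of distinct messages is at least $|\mathcal{Y}^{\mathrm{good}}| / \big(\binom{t}{4\delta t} n^{4\delta t}\big) \ge |\mathcal{Y}| / \big(8\binom{t}{4\delta t} n^{4\delta t}\big)$, using $|\mathcal{Y}| = \binom{n-3t+2}{t}$; taking logarithms gives the stated bound $\log\!\big(\binom{n-3t+2}{t} / (8\binom{t}{4\delta t} n^{4\delta t})\big)$.

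The main obstacle — and the step deserving the most care — is pinning down exactly what Bob's output reveals and formalizing that the erased-index information suffices for reconstruction: one must check that the leading $2(t-1)$ ones plus the weight bookkeeping force the optimal separator for $(Y,Z_i)$ to lie strictly inside the $i$-th ``$11$'', so that the separator position is in bijection with that block's location, and that knowing the locations of all but $4\delta t$ of the ``$11$''-blocks, plus which blocks are missing and their positions, uniquely reconstructs $Y\in\mathcal{Y}$ (all remaining symbols are forced to $0$). I would verify this on the small example ($t=2$, $n=10$) to make sure the off-by-one constants and the ``split inside the $i$-th $11$'' claim are exactly right, and then the counting is routine. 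I would also note that the factor $8$ (rather than $2$) is a deliberate slack to absorb the $\tfrac12$ from Markov together with rounding of $4\delta t$ to an integer.
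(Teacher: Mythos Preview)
Your proposal is correct and follows essentially the same approach as the paper: Yao's principle to pass to a deterministic protocol with distributional error $\delta$, the structural observation that the optimal separator for $(Y,Z_i)$ pinpoints the position of the $i$-th ``$11$'' block in $Y$, a Markov step to isolate a large set of ``good'' $Y$'s with few bad indices, and then a counting argument showing that a message cannot be shared by too many good $Y$'s.

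The only organizational difference is that the paper first partitions $\mathcal{Y}$ by message into classes $\mathcal{Y}_i$ and applies Markov \emph{twice} (once to locate message classes with average error $\le 2\delta$, then again inside each such class to find $Y$'s with $\le 4\delta t$ bad indices), and finally bounds $|\mathcal{Y}_i|$ directly via $|\mathcal{Y}_i^j|\le \binom{t}{j}n^j$. You instead apply Markov once globally to get $|\mathcal{Y}^{\mathrm{good}}|\ge |\mathcal{Y}|/2$ and then run an encoding argument ($Y\mapsto (M(Y),\text{aux})$ is injective on $\mathcal{Y}^{\mathrm{good}}$). These are two equivalent packagings of the same count; your version is slightly more streamlined (one Markov step suffices), while the paper's version makes the per-message bound $|\mathcal{Y}_i|\le 4\binom{t}{4\delta t}n^{4\delta t}$ explicit. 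Both arrive at the same denominator $8\binom{t}{4\delta t}n^{4\delta t}$ up to harmless constant slack.
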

\begin{proof}
 Let $P$ be a randomized protocol as in the statement of the lemma. Then by Yao's Lemma \cite{y77}, there is a deterministic
protocol $Q$ with distributional error at most $\delta$ that has the same communication complexity. We prove a lower bound on the communication complexity of $Q$.

Denote by $M_1, \dots, M_k$ the possible messages from Alice to Bob, and let $\mathcal{Y}_i \subseteq \mathcal{Y}$ 
denote the set of inputs that Alice maps to message $M_i$. Note that for a fixed input for Bob, the protocol $Q$ outputs 
the same result for all inputs in $\mathcal{Y}_i$. We define:
\begin{eqnarray*}
 p_i = \Pr_{Y \gets \mathcal{Y}_i, Z \gets \mathcal{Z} } [ \text{$Q$ errs on $(Y, Z)$} ].
\end{eqnarray*}
Since the distributional error of the protocol is $\delta$, or in other words $\Pr_{Y \gets \mathcal{Y}, Z \gets \mathcal{Z} } [ \text{$Q$ errs on $(Y, Z)$} ] \le \delta$,  
we obtain $\frac{\sum_i p_i |\mathcal{Y}_i|}{|\mathcal{Y}|} \le \delta$.
Let $i \in \{1, \dots, l\}$ be the indices for which $p_i \le 2\delta$. Then by the Markov Inequality,
$\sum_{i=1}^l |\mathcal{Y}_i| \ge \frac{1}{2} |\mathcal{Y}|$.

We bound $|\mathcal{Y}_i|$ from above for all $i \in \{1, \dots, l\}$.
First, note that for a particular input $Z \in \mathcal{Z}$, 
the output of $Q$ on $(Y, Z)$ is the same for all $Y \in \mathcal{Y}_i$. Denote by $\mathcal{Y}_i^j$ the subset of $\mathcal{Y}_i$ such 
that for each $Y^j \in \mathcal{Y}_i^j:$ $\Pr_{Z \gets \mathcal{Z}} [ \text{$Q$ errs on $(Y^j, Z)$} ] = \frac{j}{t}$,
or in other words, there are $j$ inputs of Bob such that the protocol fails on $Y^j$, and for the remaining $t-j$ inputs of Bob,
the protocol succeeds. Consider the set $\mathcal{Y}_i^0$, i.e., for each $Y \in \mathcal{Y}_i^0$, the protocol 
succeeds on any input of Bob. This determines all positions of the pairs of $1$s in Alice's input, and therefore,
there is only a single such element and we obtain $|\mathcal{Y}_i^0| \le 1$. Similarly, we obtain:
\begin{eqnarray*}
 |\mathcal{Y}_i^j| \le {t  \choose j } n^j,
\end{eqnarray*}
since the protocol errs on at most $j$ inputs of Bob, therefore the position of $t-j$ pairs of $1$s is fixed and only
$j$ pairs of $1$s may differ (we allow them to have an arbitrary position in $Y$ which is a very rough estimate). 

We apply the Markov Inequality again: for at least half of the elements of $\mathcal{Y}_i$, the protocol errs with 
probability at most $4 \delta$. Therefore:
\begin{eqnarray*}
\frac{1}{2} | \mathcal{Y}_i| & \le & \sum_{j \le 4 \delta t} |\mathcal{Y}_i^j| \le \sum_{j \le 4 \delta t} {t  \choose j } n^j \le 2 {t  \choose 4 \delta t } n^{4 \delta t}, 
\end{eqnarray*}
and thus $| \mathcal{Y}_i| \le 4 {t  \choose 4 \delta t } n^{4 \delta t}$. 
This implies that:
\begin{eqnarray*}
 l & \ge & \frac{|\mathcal{Y}|}{8 {t  \choose 4 \delta t } n^{4 \delta t}} = \frac{ { n - 3t + 2 \choose t } }{8 {t  \choose 4 \delta t } n^{4 \delta t}}.
\end{eqnarray*} 
Since the protocol sends at least $l$ different messages, the communication complexity of the protocol is at least $\log(l)$, which implies the result.
\qed
\end{proof}

We choose $t$ small enough so that a solution to any instance of $\mathcal{Y} \times \mathcal{Z}$ 
that is a $(1+\epsilon)$-approximation actually solves the instance exactly. This idea leads to our main lower bound theorem:
\begin{theorem} \label{thm:lb-approx}
 Any randomized one-way two-party communication protocol with error at most $\delta > 0$ ($\delta$ sufficiently small) 
that computes a $(1+\epsilon)$-approximation ($\frac{1}{\epsilon} = O(n^{1-\gamma})$ for any $\gamma > 0$) to 
\textsc{Part} on instances of $\mathcal{Y} \times \mathcal{Z}$ has communication complexity at least
$\Omega \left( \frac{1}{\epsilon} \log n  \right) .$ 
\end{theorem}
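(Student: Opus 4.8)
The plan is to derive the approximation lower bound from the exact lower bound of Lemma~\ref{lem:exact-lb} by choosing the parameter $t$ small enough, as a function of $\epsilon$, that on every instance of $\mathcal{Y}\times\mathcal{Z}$ a $(1+\epsilon)$-approximate solution is forced to be an exact solution. The structural fact I will rely on is the one already observed after the description of the hard distribution: if Bob's index is $i\in\{1,\dots,t\}$, then $X=Y\circ Z$ has total weight $W=(4t-2)+4(i-1)$, which is even, and the only prefix of $X$ of weight exactly $W/2$ ends strictly inside the $\{0,11\}$-suffix of Alice's input, precisely between the two ones of the $i$-th pair (the $2t-2$ leading ones contribute $2t-2$, and $i-1$ complete pairs plus one further one contribute $2(i-1)+1$). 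Hence the optimal bottleneck value is $B^{*}=W/2$, it is attained by a unique separator position, and every other separator produces a part of weight at least $W/2+1$ by integrality, so the second smallest achievable bottleneck value is $B^{*}+1$.

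I would then observe that whenever $(1+\epsilon)B^{*}<B^{*}+1$, every $(1+\epsilon)$-approximate partitioning must have bottleneck value exactly $B^{*}$, hence must place its separator at the unique optimal position; that is, such a protocol solves \textsc{Part} on $\mathcal{Y}\times\mathcal{Z}$ exactly, and from its output one recovers the position of the $i$-th pair in $Y$. Since $B^{*}\le (4t-2+4(t-1))/2=4t-3$ over all instances, it suffices that $\epsilon(4t-3)<1$, which holds if I set $t=\lfloor\frac{1}{4\epsilon}\rfloor=\Theta(\frac1\epsilon)$. Because $\frac1\epsilon=O(n^{1-\gamma})$, for $n$ large this $t$ is at most a constant fraction of $n$, so the construction of $\mathcal{Y}$ and $\mathcal{Z}$ is valid.

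With this choice, a randomized $(1+\epsilon)$-approximation protocol with error $\delta$ is a randomized exact protocol with error $\delta$, so Lemma~\ref{lem:exact-lb} bounds its communication from below by $\log\big(\binom{n-3t+2}{t}/(8\binom{t}{4\delta t}n^{4\delta t})\big)$. I would estimate the numerator by $\binom{n-3t+2}{t}\ge\big(\frac{n-3t+2}{t}\big)^{t}$ and the denominator by $8\binom{t}{4\delta t}n^{4\delta t}\le 8\big(\frac{en}{4\delta}\big)^{4\delta t}$, so the bound is at least $t\log\frac{n-3t+2}{t}-4\delta t\log\frac{en}{4\delta}-3$. Using $t=\Theta(\frac1\epsilon)\le Cn^{1-\gamma}$ one gets $\log\frac{n-3t+2}{t}\ge\frac{\gamma}{2}\log n$ for $n$ large, so choosing the constant $\delta$ small enough in terms of $\gamma$ (say $\delta\le\gamma/16$) absorbs the $-4\delta t\log n$ term and leaves a bound $\Omega(t\log n)=\Omega(\frac1\epsilon\log n)$, as claimed.

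The step I expect to be the main obstacle is the first paragraph: cleanly verifying, for every member of $\mathcal{Y}\times\mathcal{Z}$, that the optimal separator is unique, that the balance point never drifts out of the $\{0,11\}$-suffix into the leading block of ones or into Bob's block, and that the gap to the next-best bottleneck value is at least one — this is exactly what lets ``$(1+\epsilon)$-approximate'' collapse to ``exact''. The concluding binomial estimate is routine, though one must keep track of which constants depend on $\gamma$ and which on $\delta$.
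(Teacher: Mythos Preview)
Your proposal is correct and follows essentially the same approach as the paper: set $t=\Theta(1/\epsilon)$ so that the gap between $B^{*}$ and the next achievable bottleneck value exceeds $\epsilon B^{*}$, conclude that any $(1+\epsilon)$-approximate protocol is exact on $\mathcal{Y}\times\mathcal{Z}$, invoke Lemma~\ref{lem:exact-lb}, and simplify the resulting binomial expression. Your treatment of the uniqueness of the optimal separator (via the parity of the required prefix weight) is in fact more careful than the paper's, which leaves that point implicit.
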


\begin{proof}
We choose $t$ small enough that a solution to any instance of $\mathcal{Y} \times \mathcal{Z}$ that is a $(1+\epsilon)$-approximation actually solves the instance exactly. Remark again that the weight of $Y$ is 
$4t-2$ and the weight of $Z$ is $4(i-1)$. Since the total weight is even, there is always a partitioning with 
weight $2t-1+2(i-1)$. Therefore, any partitioning that does not achieve an optimal balancing has an
approximation factor of at least $\frac{2t-1+2(i-1) + 1}{2t-1+2(i-1)}$, and we wish to choose $t$ such that this approximation
factor is worse than a $(1+\epsilon)$ approximation. Therefore, we have to choose $t$ small enough such 
that for any $i \in \{1, 2, \dots, t \}$
\begin{eqnarray*}
 \frac{1}{2t-1+2(i-1)} > \epsilon,
\end{eqnarray*}
which implies that $t < \frac{1}{4 \epsilon} + \frac{3}{4}$. We choose $t = \frac{1}{4 \epsilon}$ and plug
this value into the communication lower bound from Lemma~\ref{lem:exact-lb}. Using standard bounds on binomial coefficients:
\begin{eqnarray*}
\Omega \left( \log \left( \frac{ { n - 3t + 2 \choose t } }{8 {t  \choose 4 \delta t } n^{4 \delta t}} \right) \right) & = & \Omega \left( \log \left( \frac{\left( 4\epsilon(n - \frac{3}{4 \epsilon} + 2) \right)^{\frac{1}{4\epsilon}}} {8 n^{\delta/\epsilon} \left(\frac{e}{4 \delta}\right)^{\delta/\epsilon} }   \right) \right) \\
& = & \Omega \left( \frac{1}{4\epsilon} \log (4\epsilon n - 3 + 8 \epsilon) - \frac{\delta}{\epsilon} \log ( \frac{n e}{4 \delta} ) \right) \\
& = & \Omega \left( \frac{1}{4\epsilon} \log (4\epsilon n) - \frac{\delta}{\epsilon} \log ( \frac{n e}{4 \delta} ) \right) \\
 & = & \Omega( \left( \frac{1}{\epsilon} \log n \right),
\end{eqnarray*}
for a sufficiently small but constant $\delta$, and $\epsilon = \Order( n^{1-\gamma})$ for any $\gamma > 0$. This proves the result. \qed

\end{proof}

\section{Conclusion and Open Problems} \label{sec:open-problems}
In this paper, we presented one-pass $(1+\epsilon)$-approximation streaming algorithms for partitioning integer 
sequences that are based on the parametric search framework. We designed a new method for carrying out 
feasibility tests of multiple parameters simultaneously, leading to an improvement over the na\"{i}ve application 
of the method. We compromised our algorithms with lower bounds showing that an optimal solution cannot be computed 
with sublinear space, and a $(1+\epsilon)$-approximation requires space $\Omega(\frac{1}{\epsilon})$, rendering
our algorithms almost tight with respect to the dependency on parameter $\epsilon$.

We demonstrated that the parametric search framework can successfully be applied in the streaming setting,
and even though the streaming model is very restrictive, it allows an improvement over the na\"{i}ve application
of the method. We believe that other problems admit parametric search algorithms in the streaming setting, and
we leave the identification and the study of those as an open problem. 

The most intriguing open question concerns the situation where a streaming algorithm has no information about
the problem parameters $m$, $n$, and $S$, the maximal weight of an element of the stream, the stream length, and the
total weight of the stream, respectively. For this situation, we designed a $2$-approximation algorithm, 
however, there is no argument contradicting the existence of a $(1+\epsilon)$-approximation algorithm.

\bibliography{partitioning}

\begin{thebibliography}{10}

\bibitem{b88}
Bokhari, S.H.:
\newblock Partitioning problems in parallel, pipeline, and distributed
  computing.
\newblock IEEE Trans. Comput. \textbf{37}(1) (1988)  48--57

\bibitem{hl92}
Hansen, P., Lih, K.W.:
\newblock Improved algorithms for partitioning problems in parallel, pipelined,
  and distributed computing.
\newblock IEEE Trans. Comput. \textbf{41}(6) (1992)  769--771

\bibitem{mo95}
Manne, F., Olstad, B.:
\newblock Efficient partitioning of sequences.
\newblock IEEE Trans. Comput. \textbf{44}(11) (1995)  1322--1326

\bibitem{ms95}
Manne, F., S{\o}revik, T.:
\newblock Optimal partitioning of sequences.
\newblock J. Algorithms \textbf{19}(2) (1995)  235--249

\bibitem{kms97}
Khanna, S., Muthukrishnan, S., Skiena, S.:
\newblock Efficient array partitioning.
\newblock In: Automata, Languages and Programming. Volume 1256., Springer
  Berlin Heidelberg (1997)  616--626

\bibitem{hcn92}
Han, Y., Narahari, B., Choi, H.A.:
\newblock Mapping a chain task to chained processors.
\newblock Inf. Process. Lett. \textbf{44}(3) (1992)  141--148

\bibitem{mp97}
Miguet, S., Pierson, J.M.:
\newblock Heuristics for 1d rectilinear partitioning as a low cost and high
  quality answer to dynamic load balancing.
\newblock In: Proceedings of the International Conference and Exhibition on
  High-Performance Computing and Networking. HPCN Europe '97, London, UK, UK,
  Springer-Verlag (1997)  550--564

\bibitem{pa04}
Pinar, A., Aykanat, C.:
\newblock Fast optimal load balancing algorithms for 1d partitioning.
\newblock J. Parallel Distrib. Comput. \textbf{64}(8) (2004)  974--996

\bibitem{f91}
Frederickson, G.N.:
\newblock Optimal algorithms for tree partitioning.
\newblock In: Proceedings of the Second Annual ACM-SIAM Symposium on Discrete
  Algorithms. SODA '91, Philadelphia, PA, USA, Society for Industrial and
  Applied Mathematics (1991)  168--177

\bibitem{sw05}
Schamberger, S., Wierum, J.M.:
\newblock Partitioning finite element meshes using space-filling curves.
\newblock Future Gener. Comput. Syst. \textbf{21}(5) (2005)  759--766

\bibitem{k11}
Konrad, C.:
\newblock Two-constraint domain decomposition with space filling curves.
\newblock Parallel Comput. \textbf{37}(4-5) (2011)  203--216

\bibitem{b13}
Bader, M.:
\newblock Space-Filling Curves - An Introduction with Applications in
  Scientific Computing. Volume~9 of Texts in Computational Science and
  Engineering.
\newblock Springer-Verlag (2013)

\bibitem{mut05}
Muthukrishnan, S.:
\newblock Data streams: Algorithms and applications.
\newblock In: Foundations and Trends in Theoretical Computer Science.
\newblock Now Publishers Inc (2005)

\bibitem{i91}
Iqbal, M.A.:
\newblock Approximate algorithms for partitioning problems.
\newblock International Journal of Parallel Programming \textbf{20}(5) (1991)
  341--361

\bibitem{m78}
Megiddo, N.:
\newblock Combinatorial optimization with rational objective functions.
\newblock In: Proceedings of the Tenth Annual ACM Symposium on Theory of
  Computing. STOC '78, New York, NY, USA, ACM (1978)  1--12

\bibitem{m83}
Megiddo, N.:
\newblock Applying parallel computation algorithms in the design of serial
  algorithms.
\newblock J. ACM \textbf{30}(4) (1983)  852--865

\bibitem{kn97}
Kushilevitz, E., Nisan, N.:
\newblock Communication complexity.
\newblock Cambridge University Press (1997)

\bibitem{y77}
Yao, A.C.C.:
\newblock Probabilistic computations: Toward a unified measure of complexity.
\newblock In: Proceedings of the 18th Annual Symposium on Foundations of
  Computer Science. FOCS '77, Washington, DC, USA, IEEE Computer Society (1977)
   222--227

\end{thebibliography}



\end{document}